\documentclass[11pt,a4paper]{article}
\usepackage{amsmath,amsfonts,amstext, amsthm, amssymb, tikz, xspace, cite, colonequals, xr-hyper, hyperref, graphicx, array, rotating, wasysym, multirow, paralist, enumerate}
\usepackage[donotfixamsmathbugs]{mathtools}
\author{Marie-Louise Bruner and Martin Lackner\bigskip\\
\texttt{marie-louise.bruner@tuwien.ac.at}\\\texttt{lackner@dbai.tuwien.ac.at}\bigskip\\
Vienna University of Technology}
\title{The computational landscape of\\ permutation patterns\thanks{Both authors were supported by the Austrian Science Foundation FWF, grant P25337-N23 and grant P25518-N23, respectively.}}
\date{}

\theoremstyle{plain}
\newtheorem{theorem} {Theorem} [section]

\newtheorem{definition} [theorem]{Definition}
\newtheorem{corollary} [theorem] {Corollary}
\newtheorem{claim} {Claim}
\theoremstyle{definition}
\newtheorem{example} [theorem]{Example}
\newtheorem*{example*}{Example}
\theoremstyle{remark}

\newcommand{\cprob}[3]{
    \begin{center}
      \fbox{
        \parbox{0.95\textwidth}{
          #1\smallskip\\
          \begin{tabular}{rp{0.73\textwidth}}
            \textit{Instance:\ } & #2\\
            \textit{Question:\ } & #3
          \end{tabular}
        }
      }
    \end{center}
}

\newcommand{\pprob}[4]{
    \begin{center}
      \fbox{
        \parbox{0.95\textwidth}{
          #1\smallskip\\
          \begin{tabular}{rp{0.73\textwidth}}
            \textit{Instance:\ } & #2\\
            \textit{Parameter:\ } & #3\\
            \textit{Question:\ } & #4
          \end{tabular}
        }
      }
    \end{center}
}


\newcommand{\pe}{permutation}
\newcommand{\mul}{multiset}
\newcommand{\pea}{permutation }

\newcommand{\run}{\mathsf{run}}
\newcommand{\lrun}{\mathsf{lrun}}

\newcommand {\cA} {{\mathcal A}}

\newcommand {\bigO} {{\mathcal O}}

\newcommand {\N} {{\mathbb N}}

\newcommand{\ccfont}[1]{\textsf{#1}}
\newcommand{\probfont}[1]{\textsc{#1}}

\newcommand{\ppm}{\probfont{PPM}\xspace}
\newcommand{\sppm}{\probfont{SPPM}\xspace}

\newcommand{\fpt}{\textnormal{\ccfont{FPT}}\xspace}
\newcommand{\xp}{\textnormal{\ccfont{XP}}\xspace}
\newcommand{\w}[1]{\ifmmode{\textnormal{\ccfont{W[#1]}}}\else{\textnormal{\ccfont{W[#1]}}}\fi}

\newcommand{\card}[1]{\ensuremath{|#1|}}

\newcommand{\ra}{\rightarrow}

\newcommand{\Ptime}{\textnormal{\ccfont{P}}\xspace}
\newcommand{\NP}{\ccfont{NP}\xspace}

\makeatletter
\def\underbracket{%
\@ifnextchar[{\@underbracket}{\@underbracket [\@bracketheight]}%
}
\def\@underbracket[#1]{%
\@ifnextchar[{\@under@bracket[#1]}{\@under@bracket[#1][0.4em]}%
}
\def\@under@bracket[#1][#2]#3{
\mathop{\vtop{\m@th \ialign {##\crcr $\hfil \displaystyle {#3}\hfil $%
\crcr \noalign {\kern 3\p@ \nointerlineskip }\upbracketfill {#1}{#2}
\crcr \noalign {\kern 3\p@ }}}}\limits}
\def\upbracketfill#1#2{$\m@th \setbox \z@ \hbox {$\braceld$}
\edef\@bracketheight{\the\ht\z@}\bracketend{#1}{#2}
\leaders \vrule \@height #1 \@depth \z@ \hfill
\leaders \vrule \@height #1 \@depth \z@ \hfill \bracketend{#1}{#2}$}
\def\bracketend#1#2{\vrule height #2 width #1\relax}
\makeatother

\newcommand{\exend}{\ifmmode\hbox{$\dashv$}\else{\unskip\nobreak\hfil\penalty50\hskip1em\null\nobreak\hfil\hbox{$\dashv$}
\parfillskip=0pt\finalhyphendemerits=0\endgraf}\fi}

\newcommand{\defend}{\ifmmode\hbox{$\dashv$}\else{\unskip\nobreak\hfil
\penalty50\hskip1em\null\nobreak\hfil\hbox{$\dashv$}
\parfillskip=0pt\finalhyphendemerits=0\endgraf}\fi}


\newcommand{\vinc}[3]{
\begin{tikzpicture}[baseline = (X.base)]
	\useasboundingbox (0.1,0) rectangle (#1*0.23,0.1);
	\foreach \x/\y in {#2}
	{
		\draw (\x*0.2,0) node (X) {$\y$};
	}
	
	\foreach \z in {#3}
	{
		\ifnum 0<\z
			\ifnum \z<#1
				\draw[thick] (\z*0.2-0.07,-0.19) -- (\z*0.2+0.27,-0.19);
			\fi
		\fi
		
		\ifnum 0=\z
			\draw[thick] (0.07,0.1) -- (0.07,-0.19) -- (0.21,-0.19);
		\fi
		
		\ifnum \z=#1
			\draw[thick] (\z*0.2+0.14,0.1) -- (\z*0.2+0.14,-0.19) -- (\z*0.2,-0.19);
		\fi
	}
\end{tikzpicture}
}

\newcommand{\bivinc}[4]{\!
\begin{tikzpicture}[baseline]

	\foreach \x/\y in {#2}
	{
		\draw (\x*0.2,0.3) node {$\x$};
		\draw (\x*0.2,0) node {$\y$};
	}
	
	\foreach \z in {#3}
	{
		\ifnum 0<\z
			\ifnum \z<#1
				\draw[thick] (\z*0.2-0.07,-0.19) -- (\z*0.2+0.27,-0.19);
			\fi
		\fi
		
		\ifnum 0=\z
			\draw[thick] (0.07,0.1) -- (0.07,-0.19) -- (0.21,-0.19);
		\fi
		
		\ifnum \z=#1
			\draw[thick] (\z*0.2+0.14,0.1) -- (\z*0.2+0.14,-0.19) -- (\z*0.2,-0.19);
		\fi
	}
	
	\foreach \z in {#4}
	{
		\ifnum 0<\z
			\ifnum \z<#1
				\draw[thick] (\z*0.2-0.07,0.49) -- (\z*0.2+0.27,0.49);
			\fi
		\fi
		
		\ifnum 0=\z
			\draw[thick] (0.07,0.21) -- (0.07,0.49) -- (0.21,0.49);
		\fi
		
		\ifnum \z=#1
			\draw[thick] (\z*0.2+0.14,0.21) -- (\z*0.2+0.14,0.49) -- (\z*0.2,0.49);
		\fi
	}
\end{tikzpicture}
\!
}

\def\nivl#1{{%
  	\vrule width .7pt
  	\vbox{
  		\hrule height .7pt
    		\kern 1pt
    		\hbox{\kern 0.5pt${#1}$}%
  	}
	\kern -0.5pt
}}

\newcommand{\nivls}[1]{\nivl{\scriptstyle #1}}

\def\vinr#1{{%
  	\vbox{
    		\hbox{${#1}$\kern 0.5pt}%
		\kern 1pt
		\hrule height .7pt
  	}
	\kern.05pt
	\vrule width .7pt
}}

\def\nivr#1{{%
  	\vbox{
		\hrule height .7pt
    		\kern 1pt
    		\hbox{${#1}$\kern 0.5pt}%
  	}
	\kern.05pt
  	\vrule width .7pt
}}

\newcommand{\nivrs}[1]{\nivr{\scriptstyle #1}}

\def\getchar{\let\char= }
\def\perm#1{\permutationloop #1\nil}
\def\permutationloop{\afterassignment\gaporsqueeze\getchar}
\def\gaporsqueeze{%
  \ifx\char\nil%
     \hspace{-\s}\let\next=\relax%
  \else%
    \ifx,\char%
      \hspace{-\s}\hspace{\g}\let\next=\permutationloop%
    \else%
      \char\hspace{\s}\let\next=\permutationloop%
    \fi%
  \fi%
  \next%
}
\def\g{0.06cm}
\def\s{-0.05cm}

\begin{document}

\maketitle
 
\begin{abstract}
In the last years, different types of patterns in permutations have been studied: vincular, bivincular and mesh patterns, just to name a few.
Every type of permutation pattern naturally defines a corresponding computational problem: Given a pattern $P$ and a permutation $T$ (the text), is $P$ contained in $T$?
In this paper we draw a map of the computational landscape of permutation pattern matching with different types of patterns.
We provide a classical complexity analysis and investigate the impact of the pattern length on the computational hardness.
Furthermore, we highlight several directions in which the study of computational aspects of permutation patterns could evolve.

\medskip
\noindent \textbf{Mathematics subject classification:} 05A05, 68Q17, 68Q25.
\end{abstract}
\section{Introduction}

The systematic study of permutation patterns was started by Simion and Schmidt~\cite{simion1985restricted} in the 1980s and has since then become a bustling field of research in combinatorics.
The core concept is the following:
A permutation $P$ is contained in a permutation $T$ as a pattern if there is an order-preserving embedding of $P$ into $T$.
For example, the pattern $132$ is contained in $2143$ since the subsequence $243$ is order-isomorphic to $132$.
In recent years other types of permutation patterns have received increased interest, such as vincular~\cite{babson2000generalized}, bivincular~\cite{bousquet20102+}, mesh~\cite{branden2011mesh}, boxed mesh~\cite{avgustinovich2013avoidance} and consecutive patterns~\cite{elizalde2003consecutive}
(all of which are introduced in Section~\ref{sec:types}).

Every type of permutation pattern naturally defines a corresponding computational problem.
Let $\mathcal{C}$ denote any type of permutation pattern, i.e., let $\mathcal{C}\in\{$classical, vincular, bivincular, mesh, boxed mesh, consecutive$\}$.
\cprob{$\mathcal{C}$ \textsc{Permutation Pattern Matching} ($\mathcal{C}$ \textsc{PPM})}
{A permutation $T$ (the text) and a $\mathcal{C}$ pattern $P$}
{Does the $\mathcal{C}$ pattern $P$ occur in $T$?}
In this paper we study the classical, vincular, bivincular, mesh, boxed mesh and consecutive pattern matching problem.
Often we abbreviate \textsc{Classical Permutation Pattern Matching} with \ppm and the other problems with $\mathcal{C}$ \textsc{PPM}, where $\mathcal{C}$ is the corresponding pattern type.

While the combinatorial structure of permutation patterns is being extensively studied, the computational perspective has so far received less attention.
This paper draws a map of the computational landscape of permutation patterns and thus aims at paving the way for a detailed computational analysis.

The contents of this paper are the following:
\begin{itemize}
\item We survey different types of permutation patterns (Section~\ref{sec:types}). This paper focuses on classical, vincular, bivincular, mesh, boxed mesh and consecutive patterns.
The hierarchy of these patterns with the most general one at the top is displayed in Figure~\ref{fig:hierarchy}.
\item We study the computational complexity of each corresponding permutation pattern matching problem.
It is known that \textsc{Classical Permutation Pattern Matching} is \NP-complete~\cite{Bose1998277} and consequently \textsc{Vincular, Bivincular} and \textsc{Mesh PPM} as well.
We strengthen this result and also show that pattern matching with boxed mesh and consecutive patterns can be performed in polynomial time
(Section~\ref{sec:classical}).
\item We offer a more fine-grained complexity analysis by employing the framework of parameterized complexity. For most \NP-complete problems we provide a more detailed complexity classification by showing \w{1}-completeness with respect to the parameter length of $P$ (Section~\ref{sec:parameterized}).
Both the classical as well as the parameterized complexity results are summarized in Table~\ref{tab:pattern_types1} (page~\pageref{tab:pattern_types1}) and  Table~\ref{tab:pattern_types2} (page~\pageref{tab:pattern_types2}).
\item In Section~\ref{sec:directions} we highlight several possible research directions and mention some aspects of permutation pattern matching that have not been studied in this paper.
\end{itemize}

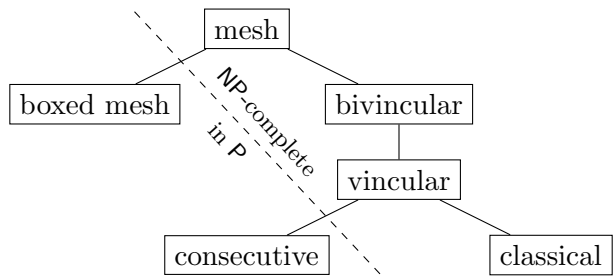
\begin{figure}
\begin{center}
\begin{tikzpicture}
\tikzstyle{every node}=[rectangle, draw=black]
\tikzstyle{empty}=[rectangle, draw=white, font=\footnotesize]
\tikzstyle{level 1}=[level distance=1cm, sibling distance=4cm]

\node (root) {mesh} 
child { node {boxed mesh}
}
child { node {bivincular}
	child {node {vincular}
		child {node {consecutive}
		}
		child {node {classical}
		}
	}
}
;
\node[empty, rotate=312] (A) at (0.25,-1.25) {\NP-complete};
\node[empty, rotate=312] (B) at (-0.25,-1.5) {in \Ptime};
\draw[dashed] (1.75,-3.25)--(-1.5,0.25);
\end{tikzpicture} 
\end{center}
\caption{Hierarchy of pattern types.}
\label{fig:hierarchy}
\end{figure}

\section{Preliminaries}

\paragraph*{Permutations.}
Let $[n]=\{1,\ldots,n\}$ and $[m,n]=\{m,m+1,\ldots,n\}$.
A permutation $\pi$ on the set $[n]$ can be seen as the sequence $\pi(1), \pi(2), \ldots, \pi(n)$.
Viewing permutations as sequences allows us to speak of \emph{subsequences} of a permutation.
We speak of a \emph{contiguous subsequence} of $\pi$ if the sequence consists of contiguous elements in $\pi$.

Graphically, a permutation $\pi$ on $[n]$ can be represented with the help of a $[0,n+1]\times[0,n+1]$-grid in which elements marked by black circles are placed at the position $(i,j)$ whenever $\pi(i)=j$. 
This representation thus corresponds to the function graph of $\pi$ when viewing permutations as bijective maps.
For examples, see Tables~\ref{tab:pattern_types1} and~\ref{tab:pattern_types2} where all permutations are represented with the help of grids.

We denote by $\pi^{-1}$ the inverse of the permutation $\pi$, by $\pi^r:=\pi(n) \pi(n-1) \ldots \pi(1)$ its reverse and by $\pi^c:= (n-\pi(1)+1) (n-\pi(2)+1) \ldots (n-\pi(n)+1)$ its complement.
From the grid corresponding to $\pi$ one can obtain $\pi^{-1}$'s grid by reflecting across the line $y=x$, $\pi^r$'s grid by reflecting across $x=n/2$ and $\pi^c$'s grid by reflecting across $y=n/2$.

\paragraph*{Classical complexity theory.}
This paper classifies several permutation pattern matching problems by proving membership in complexity classes.
We give a brief reminder of the two fundamental classes \Ptime and \NP.
The class \Ptime contains all problems that can be solved in polynomial time on a deterministic Turing machine.
It is important to note that polynomial time means polynomial in the size of the input.
The class \NP contains all problems that can be solved in polynomial time on a \emph{non-deterministic} Turing machine.
A problem is \NP-hard if every problem in \NP can be reduced to it by a polynomial time reduction.
Furthermore, a problem is \NP-complete if it is contained in \NP and \NP-hard.
For a detailed introduction to complexity theory we refer to the monographs by Papadimitriou~\cite{papadimitriou2003computational}, Goldreich~\cite{goldreich2008computational}, and Arora and Barak~\cite{arora2009computational}.

\paragraph*{Parameterized complexity theory.}
In contrast to classical complexity theory, a parameterized complexity analysis studies the runtime of an algorithm with respect to an additional parameter and not just the input size~$\card{I}$. Therefore every parameterized problem is considered as a subset of $\Sigma^*\times \N$, where $\Sigma$ is the input alphabet. 
An instance of a parameterized problem consequently consists of an input string together with a positive integer $p$, the parameter. 
\begin{definition}
A parameterized problem is \emph{fixed-parameter tractable} (or in \fpt) if there is a computable function $f$ and an integer $c$ such that 
there is an algorithm solving the problem in $\bigO(\card{I}^c \cdot f(p))$ time.
\end{definition}
The algorithm itself is also called fixed-parameter tractable (fpt).
In this paper we want to focus on the exponential runtime of algorithms, i.e., the function $f$, and therefore use the $\bigO^*$ notation which neglects polynomial factors.

A central concept in parameterized complexity theory are \emph{fixed-parameter tractable reductions}.
\begin{definition}
Let $L_1,L_2\subseteq \Sigma^*\times \N$ be two parameterized problems.
An \emph{fpt-reduction} from $L_1$ to $L_2$ is a mapping $R : \Sigma^*\times \N \ra \Sigma^*\times \N$ such that
\begin{itemize}
\item $(I, k) \in L_1$ iff $R(I, k) \in L_2$.
\item $R$ is computable by an fpt-algorithm.
\item There is a computable function g such that for $R(I, k) = (I' , k')$, $k' \leq g(k)$ holds.
\end{itemize}
\end{definition}
Other important complexity classes in the framework of parameterized complexity are $\w{1}\subseteq\w{2}\subseteq\ldots$, the \textnormal{\ccfont{W}}-hierarchy.
For our purpose, only the class \w{1} is relevant.

\begin{definition}
The class $\w{1}$ is defined as the class of all problems that are fpt-reducible to the following problem.
\end{definition}
\pprob
{\probfont{Short Turing Machine Acceptance}}
{A nondeterministic Turing machine with its transition table,
an input word $x$ and a positive integer $k$.}
{$k$}
{Does the Turing machine accept the input $x$ in at most $k$ steps?}

It is conjectured (and widely believed) that $\w{1}\neq \fpt$.
Therefore showing \w{1}-hardness can be considered as evidence that the problem is not fixed-parameter tractable.

\begin{definition}
A parameterized problem is in \xp if it can be solved in time $\bigO(\card{I}^{f(k)})$ where $f$ is a computable function.
\end{definition}

\noindent All the aforementioned classes are closed under fpt-reductions. 
The following relations between these complexity classes are known:
\[
\fpt\subseteq\w{1}\subseteq\w{2}\subseteq\ldots\subseteq\xp\text{\quad as well as}
\]
\[\fpt\subset \xp.\]
Further details can be found, for example, in the monographs by Downey and Fellows~\cite{DowneyF99Book}, Niedermeier~\cite{niedermeier2006invitation} and Flum and Grohe~\cite{FlumG2006parameterized}.

\section{Types of patterns}
\label{sec:types}

In this section we give an overview of several different types of permutation patterns that have been introduced in the last years and that will be of interest in this paper. 
These are classical, vincular, bivincular, mesh, boxed mesh and consecutive patterns.
A schematic representation of their hierarchy can be found in Figure~\ref{fig:hierarchy}.
For details, we refer to the Chapters 1 and 5-7 in Kitaev's monograph \emph{Patterns in Permutations and Words} \cite{kitaev2011patterns}. 
Before we introduce different types of patterns, we precisely define matchings in the context of permutation patterns.

\begin{definition}
Let $\mathcal{C}\in\{$classical, vincular, bivincular, mesh, boxed mesh, consecutive$\}$.
A \emph{matching} of a $\mathcal{C}$ pattern $P$ of length $k$ into a permutation $T$ of length $n$ is an increasing mapping $\mu:[k] \rightarrow [n]$ such that the sequence $\mu(P(1)), \mu(P(2)), \ldots, \mu(P(k))$ is an occurrence of the $\mathcal{C}$ pattern $P$ in $T$.
\end{definition}
Matchings are denoted by $\mu$ throughout the paper.

\subsection{Classical patterns}

Classical permutation patterns, or simply permutation patterns, have implicitly been studied in different contexts for more than a hundred years.  
The first mentioning of a (classical) permutation pattern is attributed to Knuth and an exercise of his \textit{Fundamental algorithms} \cite{DBLP:books/aw/Knuth68} in 1968. 
It was however only in 1985 that Simion and Schmidt performed the first systematic study of patterns in permutations in \cite{simion1985restricted}. 
For an introduction to classical permutation patterns, see also B\'{o}na's \textit{Combinatorics of permutations} \cite{bona_combinatorics_2004}.

\begin{definition}
Let $P$ be a permutation of length $k$ and $T=T(1) \ldots T(n)$ a permutation of length $n$.
An \emph{occurrence of the classical permutation pattern} $P$ is a subsequence $T(i_1)\,T(i_2)\, \ldots \,T(i_k)$ of $T$ that is order-isomorphic to $P$, i.e., a subsequence in which the letters appear in the same relative order as in $P$. 
If such a subsequence exists, one says that $T$ \emph{contains} $P$ or that there is a \emph{matching} of $P$ into $T$. 
If there is no such map one says than $T$ \emph{avoids} the (classical) pattern $P$.
\label{def:pattern}
\end{definition}

\begin{table}
\begin{center}
\newcolumntype{C}{ >{\centering\arraybackslash} m{0.22\textwidth} }
\newcolumntype{D}{ >{\centering\arraybackslash} m{0.15\textwidth} }
\newcolumntype{E}{ >{\centering\arraybackslash} m{0.25\textwidth} }
\newcolumntype{F}{ >{\centering\arraybackslash} m{0.23\textwidth} }
{\footnotesize
\begin{tabular}{|D|C|F|E|}
\hline  & Classical & Vincular & Bivincular \\ \hline
Pattern & $P=132=$ 
\scalebox{0.5}{\begingroup
\setbox0=\hbox{\begin{tikzpicture}
[0/.style={rectangle, draw, minimum size=28.5pt}, 
1/.style={circle, draw, fill=black, minimum size=7pt, inner sep=2pt}, 
m/.style={circle, draw, fill=none, minimum size=11pt}, 
2/.style={rectangle, fill=black!20, minimum size=14.25pt}, 
scale=0.5]

   
\foreach \y in {1,2,3}
\draw (0,\y)--(4, \y);
\foreach \x in {1,2,3}
\draw (\x,0)--(\x, 4);

\foreach \x/\y in { 1/1, 2/3, 3/2}
    \node[1] at (\x, \y) {};

\end{tikzpicture}}%
\parbox{\wd0}{\box0}\endgroup} & $P=\vinc{3}{1/1, 2/3, 3/2}{1}=$ 
\scalebox{0.5}{\begingroup
\setbox0=\hbox{\begin{tikzpicture}
[0/.style={rectangle, draw, minimum size=28.5pt}, 
1/.style={circle, draw, fill=black, minimum size=7pt, inner sep=2pt}, 
m/.style={circle, draw, fill=none, minimum size=11pt}, 
2/.style={rectangle, fill=black!20, minimum size=14.25pt}, 
scale=0.5]

\foreach \y in {0, 1, 2, 3}
    \node[2] at (1.5, 0.5+\y) {};    
   
\foreach \y in {1,2,3}
\draw (0,\y)--(4, \y);
\foreach \x in {1,2,3}
\draw (\x,0)--(\x, 4);

\foreach \x/\y in { 1/1, 2/3, 3/2}
    \node[1] at (\x, \y) {};

\end{tikzpicture}}%
\parbox{\wd0}{\box0}\endgroup}  $\textsf{cols}(P)=1$ & \vspace{0.1cm} 
$P=\bivinc{3}{1/1,2/3,3/2}{1}{0,2}=$ 
\scalebox{0.5}{\begingroup
\setbox0=\hbox{\begin{tikzpicture}
[0/.style={rectangle, draw, minimum size=28.5pt}, 
1/.style={circle, draw, fill=black, minimum size=7pt, inner sep=2pt}, 
m/.style={circle, draw, fill=none, minimum size=11pt}, 
2/.style={rectangle, fill=black!20, minimum size=14.25pt}, 
scale=0.5]

\foreach \x in {0, 1, 2, 3}
   \node[2] at (0.5+\x, 0.5) {};
\foreach \x in {0, 1, 2, 3}
   \node[2] at (0.5+\x, 2.5) {};
\foreach \y in {0, 1, 2, 3}
    \node[2] at (1.5, 0.5+\y) {};    
   
\foreach \y in {1,2,3}
\draw (0,\y)--(4, \y);
\foreach \x in {1,2,3}
\draw (\x,0)--(\x, 4);

\foreach \x/\y in { 1/1, 2/3, 3/2}
    \node[1] at (\x, \y) {};

\end{tikzpicture}}%
\parbox{\wd0}{\box0}\endgroup}  $\textsf{cols}(P)=1$ $\textsf{rows}(P)=2$\\ \hline
Text & \vspace{0.1cm} 
\scalebox{0.5}{\begingroup
\setbox0=\hbox{\begin{tikzpicture}
[0/.style={rectangle, draw, minimum size=28.5pt}, 
1/.style={circle, draw, fill=black, minimum size=7pt, inner sep=2pt}, 
m/.style={circle, draw, fill=none, minimum size=11pt}, 
2/.style={rectangle, fill=black!20, minimum size=14.25pt}, 
scale=0.5]

   
\foreach \y in {1,2,3,4,5,6}
\draw (0,\y)--(7, \y);
\foreach \x in {1,2,3,4,5,6}
\draw (\x,0)--(\x, 7);

\foreach \x/\y in { 1/1, 2/6, 3/4, 4/2, 5/5, 6/3}
    \node[1] at (\x, \y) {};
\foreach \x/\y in { 1/1,3/4, 4/2}
    \node[m] at (\x, \y) {};

\end{tikzpicture}}%
\parbox{\wd0}{\box0}\endgroup} \vspace{0.1cm} & 
\scalebox{0.5}{\begingroup
\setbox0=\hbox{\begin{tikzpicture}
[0/.style={rectangle, draw, minimum size=28.5pt}, 
1/.style={circle, draw, fill=black, minimum size=7pt, inner sep=2pt}, 
m/.style={circle, draw, fill=none, minimum size=11pt}, 
2/.style={rectangle, fill=black!20, minimum size=14.25pt}, 
scale=0.5]

\foreach \y in {0, 1, 2, 3, 4, 5, 6}
    \node[2] at (1.5, 0.5+\y) {};    
   
\foreach \y in {1,2,3,4,5,6}
\draw (0,\y)--(7, \y);
\foreach \x in {1,2,3,4,5,6}
\draw (\x,0)--(\x, 7);

\foreach \x/\y in { 1/1, 2/6, 3/4, 4/2, 5/5, 6/3}
    \node[1] at (\x, \y) {};
\foreach \x/\y in { 1/1,2/6, 4/2}
    \node[m] at (\x, \y) {};

\end{tikzpicture}}%
\parbox{\wd0}{\box0}\endgroup}& 
\scalebox{0.5}{\begingroup
\setbox0=\hbox{\begin{tikzpicture}
[0/.style={rectangle, draw, minimum size=28.5pt}, 
1/.style={circle, draw, fill=black, minimum size=7pt, inner sep=2pt}, 
m/.style={circle, draw, fill=none, minimum size=11pt}, 
2/.style={rectangle, fill=black!20, minimum size=14.25pt}, 
scale=0.5]

\foreach \x in {0, 1, 2, 3, 4, 5, 6}
    \node[2] at (0.5+\x, 5.5) {};
\foreach \x in {0, 1, 2, 3, 4, 5, 6}
    \node[2] at (0.5+\x, 0.5) {};
\foreach \y in {0, 1, 2, 3, 4, 5, 6}
    \node[2] at (1.5, 0.5+\y) {};    
   
\foreach \y in {1,2,3,4,5,6}
\draw (0,\y)--(7, \y);
\foreach \x in {1,2,3,4,5,6}
\draw (\x,0)--(\x, 7);

\foreach \x/\y in { 1/1, 2/6, 3/4, 4/2, 5/5, 6/3}
    \node[1] at (\x, \y) {};
\foreach \x/\y in { 1/1, 2/6, 5/5}
    \node[m] at (\x, \y) {};

\end{tikzpicture}}%
\parbox{\wd0}{\box0}\endgroup}\\ \hline
Classical complexity & \NP-complete \cite{Bose1998277}& \NP-complete Corollary \ref{cor:NP-complete} & \NP-complete Corollary \ref{cor:NP-complete} \\ \hline
Parameterized  complexity & \fpt~\cite{guillemotmarx2013ppmfpt} & \w1-complete Theorem \ref{thm:w1_completeness_vincular}& \w1-complete Theorem \ref{thm:w1_completeness_bivincular}\\ \hline
\end{tabular}
}
\caption{Examples of classical, vincular and bivincular permutation patterns.}
\label{tab:pattern_types1}
\end{center}
\end{table}

\begin{example}
The classical pattern $P=132$ is contained several times in the text $T=164253$ as for instance shown by the subsequence $1 4 2$.
A matching $\mu$ is given by $\mu(1)=1$, $\mu(3)=4$ and $\mu(2)=2$.
The pattern $P=1234$ is however not contained in $T$ since no increasing subsequence of length four can be found in $T$.
\end{example}

Representing permutations with the help of grids allows for a simple interpretation of classical pattern containment respectively avoidance in permutations.
Indeed, the pattern $P$ is contained in the permutation $T$ iff the grid corresponding to $P$ can be obtained from the one corresponding to $T$ by deleting some columns and rows. 
For the example given above, see the left column in Table~\ref{tab:pattern_types1}, in which the elements involved in the matching have been marked by circled elements.

It is easy to see that $P$ can be matched into $T$ iff $P^c$ can be matched into $T^c$, iff $P^r$ can be matched into $T^r$ and iff $P^{-1}$ can be matched into $T^{-1}$.

\subsection{Vincular patterns} 
\label{sec:vincular}

Let $T(i_1)T(i_2) \ldots T(i_k)$ be an occurrence of the classical pattern $P$ in the text $T$.
Then there are no requirements on the elements in $T$ lying in between  $T(i_j)$ and $T(i_{j+1})$.
It is however natural to ask for occurrences of patterns in which certain elements are forced to be adjacent in the text, i.e., $T(i_{j+1})=T(i_{j}+1)$.
Vincular patterns are a generalization of classical patterns capturing these requirements on adjacency in the text.
They were introduced under the name of \textit{generalized patterns} in 2000 by Babson and Steingr\'{i}msson in~\cite{babson2000generalized}, where it was shown that essentially all Mahonian permutation
statistics in the literature can be written as linear combinations of vincular patterns. 
For a survey of this topic, see~\cite{steingrimsson2010generalized}. 

Here we use the name of \textit{vincular patterns} as it was introduced by Kitaev in \cite{kitaev2011patterns}. 
We also use the notation introduced there, since it is consistent with the notation for classical patterns.

\begin{definition}
A \emph{vincular pattern} $P$ is a permutation in which certain consecutive entries may be underlined. 
An occurrence of $P$ in a permutation $T$ is then an occurrence of the corresponding classical pattern for which underlined elements are matched to adjacent elements. 
To be more formal: An occurrence of $P$ in $T$ corresponds to a subsequence $T(i_1)T(i_2) \ldots T(i_k)$ of $T$ that is order-isomorphic to $P$ and for which $T(i_{j+1})=T(i_{j}+1)$ whenever $P$ contains
$\underline{P(j) P(j+1)}$.
Furthermore, if $P$ starts with $\raisebox{1pt}{\vinc{4}{1/\ P, 2/\ (, 3/1, 4/)}{0}}$ an occurrence of $P$ in $T$ must start with the first entry in $T$, i.e., $T(i_1)=T(1)$.
Similarly, if $P$ ends with $\raisebox{1pt}{\vinc{4}{1/\ P, 2/\ (, 3/k, 4/)}{4}}$ it must hold that $T(i_k)=T(n)$.
\label{def:vincular}
\end{definition}

When representing permutations by grids, adjacency of positions clearly corresponds to adjacency of columns. 
In order to represent the underlined elements in vincular patterns in the corresponding grids, one shades the columns which may not contain any elements in a matching. 
For an example, see the middle column of Table~\ref{tab:pattern_types1}.
Matching the pattern $\vinc{3}{1/1, 2/3, 3/2}{1}$ into the permutation $T$, means that no elements may lie in the columns between $\mu(1)$ and $\mu(3)$ in $T$.

In order to specify how many adjacency restrictions are made in the vincular pattern $P$, we define $\textsf{cols}(P)$ to be the number of shaded columns in the grid corresponding to $P$.

Note that the operations \textit{complement} and \textit{reverse} may be performed on vincular patterns, leading to some (other) vincular pattern.
Similarly as for classical patterns it then holds that $P$ can be matched into $T$ iff $P^c$ can be matched into $T^c$ and iff $P^r$ can be matched into $T^r$.
The inverse of a vincular pattern is however not clearly defined.
This leads to a larger class of patterns which is introduced below.

\begin{table}
\newcolumntype{C}{ >{\centering\arraybackslash} m{0.23\textwidth} }
\newcolumntype{D}{ >{\centering\arraybackslash} m{0.15\textwidth} }
\newcolumntype{E}{ >{\centering\arraybackslash} m{0.25\textwidth} }
\newcolumntype{F}{ >{\centering\arraybackslash} m{0.24\textwidth} }
\begin{center}
{\footnotesize
\begin{tabular}{|D|E|F|C|}
\hline  & Mesh & Boxed mesh & Consecutive \\ \hline
Pattern & \vspace{0.1cm} $P=(\pi,R)=$ 
\scalebox{0.5}{\begingroup
\setbox0=\hbox{\begin{tikzpicture}
[0/.style={rectangle, draw, minimum size=28.5pt}, 
1/.style={circle, draw, fill=black, minimum size=7pt, inner sep=2pt}, 
m/.style={circle, draw, fill=none, minimum size=11pt}, 
2/.style={rectangle, fill=black!20, minimum size=14.25pt}, 
scale=0.5]

\foreach \x/\y in {1/0, 1/2, 2/3, 3/0, 3/1}
    \node[2] at (0.5+\x, 0.5+\y) {};   
   
\foreach \y in {1,2,3}
\draw (0,\y)--(4, \y);
\foreach \x in {1,2,3}
\draw (\x,0)--(\x, 4);

\foreach \x/\y in { 1/1, 2/3, 3/2}
    \node[1] at (\x, \y) {};

\end{tikzpicture}}%
\parbox{\wd0}{\box0}\endgroup} $\textsf{cells}(P)=5$& $P=\fbox{132}=$ 
\scalebox{0.5}{\begingroup
\setbox0=\hbox{\begin{tikzpicture}
[0/.style={rectangle, draw, minimum size=28.5pt}, 
1/.style={circle, draw, fill=black, minimum size=7pt, inner sep=2pt}, 
m/.style={circle, draw, fill=none, minimum size=11pt}, 
2/.style={rectangle, fill=black!20, minimum size=14.25pt}, 
scale=0.5]

\foreach \x/\y in {1/1, 1/2, 2/1, 2/2}
    \node[2] at (0.5+\x, 0.5+\y) {};   
   
\foreach \y in {1,2,3}
\draw (0,\y)--(4, \y);
\foreach \x in {1,2,3}
\draw (\x,0)--(\x, 4);

\foreach \x/\y in { 1/1, 2/3, 3/2}
    \node[1] at (\x, \y) {};

\end{tikzpicture}}%
\parbox{\wd0}{\box0}\endgroup} & $P=\vinc{3}{1/1, 2/3, 3/2}{1,2}=$ 
\scalebox{0.5}{\begingroup
\setbox0=\hbox{\begin{tikzpicture}
[0/.style={rectangle, draw, minimum size=28.5pt}, 
1/.style={circle, draw, fill=black, minimum size=7pt, inner sep=2pt}, 
m/.style={circle, draw, fill=none, minimum size=11pt}, 
2/.style={rectangle, fill=black!20, minimum size=14.25pt}, 
scale=0.5]

\foreach \y in {0, 1, 2, 3}
    \node[2] at (1.5, 0.5+\y) {}; 
\foreach \y in {0, 1, 2, 3}
    \node[2] at (2.5, 0.5+\y) {};

\foreach \y in {1,2,3}
\draw (0,\y)--(4, \y);
\foreach \x in {1,2,3}
\draw (\x,0)--(\x, 4);

\foreach \x/\y in { 1/1, 2/3, 3/2}
    \node[1] at (\x, \y) {};

\end{tikzpicture}}%
\parbox{\wd0}{\box0}\endgroup}  \\ \hline
Text & \vspace{0.1cm} 
\scalebox{0.5}{\begingroup
\setbox0=\hbox{\begin{tikzpicture}
[0/.style={rectangle, draw, minimum size=28.5pt}, 
1/.style={circle, draw, fill=black, minimum size=7pt, inner sep=2pt}, 
m/.style={circle, draw, fill=none, minimum size=11pt}, 
2/.style={rectangle, fill=black!20, minimum size=14.25pt}, 
scale=0.5]

\foreach \x/\y in { 1/0, 1/3, 1/4, 1/5, 2/6, 3/6, 4/6, 5/6, 6/0, 6/1, 6/2}
	\node[2] at (0.5+\x,0.5+\y) {};  
   
\foreach \y in {1,2,3,4,5,6}
\draw (0,\y)--(7, \y);
\foreach \x in {1,2,3,4,5,6}
\draw (\x,0)--(\x, 7);

\foreach \x/\y in { 1/1, 2/6, 3/4, 4/2, 5/5, 6/3}
    \node[1] at (\x, \y) {};
\foreach \x/\y in { 1/1, 2/6, 6/3}
    \node[m] at (\x, \y) {};

\end{tikzpicture}}%
\parbox{\wd0}{\box0}\endgroup} \vspace{0.1cm} & 
\scalebox{0.5}{\begingroup
\setbox0=\hbox{\begin{tikzpicture}
[0/.style={rectangle, draw, minimum size=28.5pt}, 
1/.style={circle, draw, fill=black, minimum size=7pt, inner sep=2pt}, 
m/.style={circle, draw, fill=none, minimum size=11pt}, 
2/.style={rectangle, fill=black!20, minimum size=14.25pt}, 
scale=0.5]

\foreach \x/\y in { 1/1, 1/2, 1/3, 2/1, 2/2, 2/3, 3/1, 3/2, 3/3}
	\node[2] at (0.5+\x,0.5+\y) {};  
   
\foreach \y in {1,2,3,4,5,6}
\draw (0,\y)--(7, \y);
\foreach \x in {1,2,3,4,5,6}
\draw (\x,0)--(\x, 7);

\foreach \x/\y in { 1/1, 2/6, 3/4, 4/2, 5/5, 6/3}
    \node[1] at (\x, \y) {};
\foreach \x/\y in { 1/1, 3/4, 4/2}
    \node[m] at (\x, \y) {};

\end{tikzpicture}}%
\parbox{\wd0}{\box0}\endgroup}& 
\scalebox{0.5}{\begingroup
\setbox0=\hbox{\begin{tikzpicture}
[0/.style={rectangle, draw, minimum size=28.5pt}, 
1/.style={circle, draw, fill=black, minimum size=7pt, inner sep=2pt}, 
m/.style={circle, draw, fill=none, minimum size=11pt}, 
2/.style={rectangle, fill=black!20, minimum size=14.25pt}, 
scale=0.5]

\foreach \y in {0, 1, 2, 3, 4, 5, 6}
    \node[2] at (1.5, 0.5+\y) {};  
\foreach \y in {0, 1, 2, 3, 4, 5, 6}
    \node[2] at (2.5, 0.5+\y) {};

\foreach \y in {1,2,3,4,5,6}
\draw (0,\y)--(7, \y);
\foreach \x in {1,2,3,4,5,6}
\draw (\x,0)--(\x, 7);

\foreach \x/\y in { 1/1, 2/6, 3/4, 4/2, 5/5, 6/3}
    \node[1] at (\x, \y) {};
\foreach \x/\y in { 1/1, 2/6, 3/4}
    \node[m] at (\x, \y) {};

\end{tikzpicture}}%
\parbox{\wd0}{\box0}\endgroup}\\ \hline
Classical  complexity & \NP-complete Corollary \ref{cor:NP-complete} & in \Ptime;  Theorem~\ref{thm:polyBoxed} & in \Ptime;  Theorem~\ref{thm:polyConsec} \\ \hline
Parameterized  Complexity & \w1-complete Theorem~\ref{thm:w1_completeness_mesh}& trivially \fpt & trivially \fpt\\ \hline
\end{tabular}
}
\caption{Examples of mesh, boxed mesh and consecutive permutation patterns.}
\label{tab:pattern_types2}
\end{center}
\end{table}

\subsection{Bivincular patterns} 

Bivincular patterns generalize classical patterns even further than vincular patterns. 
Indeed, in bivincular patterns, not only positions but also values of elements involved in a matching may be forced to be adjacent. 
When Bousquet-M{\'e}lou, Claesson, Dukes and Kitaev introduced bivincular patterns in 2010 \cite{bousquet20102+}, the main motivation was to find a minimal superset of vincular patterns that is closed under the inverse operation. 
As mentioned in Section~\ref{sec:vincular}, the inverse of a vincular pattern is not well-defined - it is a bivincular, but not a vincular pattern.

\begin{definition}
A \emph{bivincular pattern} $P$ is a permutation written in two-line notation, where some elements in the top row may be overlined and the bottom row is a vincular pattern as defined in Definition~\ref{def:vincular}.
An occurrence $T(i_1)T(i_2)\ldots T(i_k)$ of $P$ in a permutation $T$ is an occurrence of the corresponding vincular pattern where additionally the following holds:
$T(i_{j+1})=T(i_{j})+1$ whenever the top row of $P$ contains $\overline{j (j+1)}$.
Furthermore, if the top row starts with $\nivls{1}$, an occurrence of $P$ in $T$ must start with the smallest entry in $T$, i.e., $T(i_1)=1$.
Similarly, if the top row ends with $\nivrs{k}$, it must hold that $T(i_k)=n$.
\end{definition}

This definition gets a lot less cumbersome when representing permutations with the help of grids:
As remarked earlier, underlined elements in the bottom row are translated into forbidden columns in which no elements may occur in a matching.
Similarly, overlined elements in the top row are translated into forbidden rows.
For an example, see the right column in Table~\ref{tab:pattern_types1}.

Again, in order to specify how many adjacency restrictions are made in the bivincular pattern $P$, we define - in addition to $\textsf{cols}(P)$ - $\textsf{rows}(P)$ to be the number of shaded rows in the grid corresponding to $P$.

\subsection{Mesh patterns} 

A further generalization of bivincular patterns was given by Br{\"a}nd{\'e}n and Claesson who introduced mesh patterns in \cite{branden2011mesh} in 2011.
Mesh patterns allow further restrictions on the relative positions of the entries in an occurrence of a pattern.
Several permutation statistics can be formulated as the number of occurences of certain mesh patterns~\cite{branden2011mesh}. 

\begin{definition}
A \emph{mesh pattern} is a pair $P=(\pi, R)$ where $\pi$ is a permutation of length $k$ and $R \subset [0,k] \times [0, k]$ is a relation.
An occurrence of $P$ in a permutation $T$ is an occurrence of the classical pattern $\pi$ fulfilling additional restrictions defined by $R$.
That is to say there is a subsequence $T(i_1)T(i_2)\ldots T(i_k)$ of $T$ that is order-isomorphic to $\pi$ and for which holds:
\[(x,y) \in R \Longrightarrow  \nexists i \in [n]: i_x < i < i_{x+1} \wedge T\left( i_{\pi^{-1}(y)}\right)  < T(i) < T\left( i_{\pi^{-1}(y+1)}\right).\]
\label{def:mesh}
\end{definition}

This definition is again a lot easier to capture when representing permutations as grids.
Indeed, the relation $R$ can be translated very easily into the graphical representation of $P=(\pi,R)$, by shading the unit square with bottom left corner $(x,y)$ for every $(x,y) \in R$.
An occurrence  of $P$ in a permutation $T$ is then a classical occurrence of $\pi$ in $T$ such that no elements of $T$ lie in the shaded regions of the grid.

Again, in order to specify how many adjacency restrictions are made in the mesh pattern $P$, we define $\textsf{cells}(P)$ to be the number of shaded cells in the corresponding grid.
Thus $\textsf{cells}(\pi, R) \colonequals |R|$. 
For an example with $P=(\pi, R)$ with $\pi=132$ and $R=\left\lbrace (1,0),(1,2),(2,3),(3,0),(3,1)\right\rbrace $ see the left column in Table~\ref{tab:pattern_types2}.

\subsection{Boxed mesh patterns} 

A special case of mesh patterns, so called boxed mesh patterns, was very recently introduced by Avgustinovich, Kitaev and Valyuzhenich in \cite{avgustinovich2013avoidance}.

\begin{definition}
A \emph{boxed mesh} pattern, or simply boxed pattern, is a mesh pattern $P=(\pi, R)$ where $\pi$ is a permutation of length $k$ and $R = [1,k-1] \times [1, k-1]$.
$P$ is then denoted by \fbox{$\pi$}.
\end{definition}

In the grid representing a boxed pattern all but the boundary squares are shaded.
For an example, see the middle column of Table~\ref{tab:pattern_types2}.

It is straightforward to see that the set of boxed patterns is closed under taking complements, reverses and inverses and that these operations are compatible with pattern containment.
Interestingly, it was shown \cite{avgustinovich2013avoidance} that the statement ``$\pi$ can be matched into $T$ iff \fbox{$\pi$} can be matched into $T$'' is only true if $\pi$ is one of the following permutations: 1, 12, 21, 132, 213, 231, 312.

\subsection{Consecutive patterns} 

Consecutive patterns are a special case of vincular patterns, namely those where \textit{all} entries are underlined. 
In an occurrence of a consecutive pattern it is thus necessary that all entries are adjacent.
Finding an occurrence of a consecutive pattern therefore consists in finding a contiguous subsequence of $T$ that is order-isomorphic to $P$.
For an example, see the right column of Table~\ref{tab:pattern_types2}.

Several well-known enumeration problems for permutations can be formulated in terms of forbidden consecutive patterns; Elizalde and Noy~\cite{elizalde2003consecutive} provide examples.
Chapter 5 in \cite{kitaev2011patterns} is devoted to and gives an overview of different methods employed in the literature for the study of consecutive patterns.

\section{The possibility of polynomial-time algorithms}
\label{sec:classical}

\subsection{NP-completeness}
At the 1992 SIAM Discrete Mathematics meeting Herbert Wilf asked whether it is possible to solve the permutation pattern matching problem in polynomial time. 
The answer is no unless $\mathsf{P}$=\NP, as shown by the \NP-completeness result of Bose, Buss and Lubiw~\cite{Bose1998277}.
This result immediately yields \NP-hardness for all generalizations of classical permutation pattern matching.
In this section we are going to show that \NP-hardness holds for these problems even in a more restricted case: with all runs having length at most two.

\begin{definition}
A \emph{run} in a permutation is a maximal monotone contiguous subsequence.
Let $\lrun(\pi)$ denote the length of the longest run in the permutation $\pi$.
\end{definition}

Note that for any permutation $\pi$ with length at least two it holds that $\lrun(\pi)\geq 2$.
 
\begin{theorem}
Every \textsc{Mesh Permutation Pattern Matching} instance $(P,T)=((\pi,R),T)$ can be transformed into an instance $(P',T')$ with $P'=(\pi',R)$ and the following properties: $(P',T')$ is a yes-instance iff $(P,T)$ is yes-instance, $\card{\pi'} = 2\card \pi$, $\card{T'} = 2\card T$ and $\lrun(\pi')=\lrun(T')=2$.
This transformation can be done in polynomial time.
\label{thm:lrun}
\end{theorem}
\begin{proof}
Let $\pi=p_1\ldots p_k$ and $T=t_1\ldots t_n$.
We define 
\begin{align*}
\pi' &= (k+1)\ p_1\ (k+2)\ p_2\ (k+3)\ldots (2k)\ p_k\\
T' &= (n+1)\ t_1\ (n+2)\ t_2\ (n+3)\ldots (2n)\ t_n.
\end{align*}
Clearly, $\card{\pi'} = 2\card \pi$, $\card{T'} = 2\card T$ and $\lrun(\pi')=\lrun(T')=2$.
We are now going to show that there is a matching from $P$ into $T$ iff  there is a matching from $P'$ into $T'$.
Assume that $\mu$ is a matching from $P$ into $T$, i.e., a map from $[k]$ to $[n]$.
We extend this map to a map $\mu'$ from $[2k+1]$ to $[2n+1]$ in the following way:
\[\mu'(i)=
\begin{cases}
 \mu(i), & \text{if }i\in[k], \\
 T(j), \text{ where }\mu(i-k)=T(j+1) & \text{if }i>k.
\end{cases}\]
In other words, $\mu'$ maps $(i+k)$ to the element in $T$ left of $\mu(i)$.
For example if $\mu (p_3)=t_5$ then $p_3\in\pi'$ is matched to $t_5\in T'$ and $(k+3)\in\pi'$ is matched to $n+5\in T'$ (which is the element in $T$ lying directly to the left of $t_5$).
Observe that the function $\mu'$ is a matching from $P'$ into $T'$.

Now let us assume that $\mu'$ is a matching from $P'$ into $T'$.
If we restrict the domain of $\mu'$ to $[k]$ then we obtain a matching from $P$ into $T$.
\end{proof}

\begin{theorem}
\label{thm:lrun-npc}
\ppm is \NP-complete even on permutations $P$ and $T$ with $\lrun(P')=\lrun(T')=2$.
\end{theorem}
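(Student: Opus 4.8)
The plan is to derive this directly from NP-membership together with the reduction already established in Theorem~\ref{lem:lrun}, applied in the special case of an empty mesh relation.

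\emph{Membership.} Classical \ppm lies in \NP: a matching of a length-$k$ pattern $P$ into a length-$n$ text $T$ is certified by the $k$ positions $i_1<\dots<i_k$, and checking that $T(i_1)\dots T(i_k)$ is order-isomorphic to $P$ takes polynomial time. This certificate is oblivious to the run lengths of $P$ and $T$, so the restriction to instances with $\lrun(P)=\lrun(T)=2$ is still in \NP.

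\emph{Hardness.} By Bose, Buss and Lubiw~\cite{Bose1998277}, classical \ppm is \NP-hard. A classical pattern $\pi$ is nothing but the mesh pattern $(\pi,\emptyset)$, so any classical instance $(\pi,T)$ is, in particular, a \textsc{Mesh PPM} instance with empty relation. Feeding it to the transformation of Theorem~\ref{lem:lrun} produces, in polynomial time, an instance $(P',T')=((\pi',\emptyset),T')$ that is a yes-instance iff $(\pi,T)$ is, and that satisfies $\lrun(\pi')=\lrun(T')=2$. Crucially the transformation leaves the relation component untouched (there $P'=(\pi',R)$ for the same $R$), so with $R=\emptyset$ the output $(\pi',T')$ is again a genuine classical \ppm instance. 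Thus we obtain a polynomial-time many-one reduction from unrestricted classical \ppm to classical \ppm on instances with longest run of length two, which establishes \NP-hardness of the latter.

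Putting the two halves together proves the theorem. There is essentially no obstacle: the combinatorial work has already been done in Theorem~\ref{lem:lrun}, and the only point that needs a moment's attention is checking that the reduction there preserves the classical (empty-relation) setting, so that the target problem is truly classical \ppm and not merely its mesh generalization.
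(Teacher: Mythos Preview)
Your proof is correct and follows essentially the same approach as the paper: invoke \NP-membership directly and derive \NP-hardness by applying the transformation of Theorem~\ref{lem:lrun} to an arbitrary classical \ppm instance. Your explicit observation that the transformation preserves the (empty) relation $R$, so that the output is again a classical instance, is a nice clarification that the paper leaves implicit.
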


\begin{proof}
We apply the transformation in Theorem~\ref{thm:lrun} to show \NP-hardness.
\NP-membership holds for this restricted class of input instances as well.
\end{proof}

\begin{corollary}
\label{cor:NP-complete}
\textsc{Vincular}, \textsc{Bivincular} and \textsc{Mesh} \ppm are \NP-complete even if $\lrun(P')=\lrun(T')=2$.
\end{corollary}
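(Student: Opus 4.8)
The plan is to derive the corollary directly from Theorem~\ref{thm:lrun-npc} together with the fact that mesh patterns sit at the top of the hierarchy in Figure~\ref{fig:hierarchy}; essentially no new work is needed beyond bookkeeping.

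For \NP-hardness I would argue as follows. A classical pattern $P$ is the degenerate case of a vincular pattern with no underlined entries (and no anchoring to the first or last position), of a bivincular pattern with additionally no overlined entries, and of a mesh pattern $(\pi,R)$ with $R=\emptyset$. In each of these degenerate cases Definition~\ref{def:vincular} and the ones following it impose no extra requirements, so the occurrences of the generalized pattern in a text $T$ are exactly the classical occurrences of $P$ in $T$. Hence mapping a \ppm instance $(P,T)$ to the corresponding \textsc{Vincular} (resp.\ \textsc{Bivincular}, \textsc{Mesh}) \ppm instance is a polynomial-time reduction that moreover leaves the underlying permutations of $P$ and $T$ untouched. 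Since Theorem~\ref{thm:lrun-npc} already gives \NP-hardness of \ppm restricted to instances with $\lrun(P)=\lrun(T)=2$, the same restriction on \textsc{Vincular}, \textsc{Bivincular} and \textsc{Mesh} \ppm is \NP-hard as well.

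For \NP-membership I would note that in all three problems a matching $\mu\colon[k]\to[n]$ is a certificate of size polynomial in the input, and verifying that a proposed $\mu$ is a valid occurrence --- checking order-isomorphism, the adjacency conditions coming from underlined and overlined entries, the boundary conditions, and, for mesh patterns, the at most $(k+1)^2$ forbidden-region conditions encoded by $R$ --- is clearly doable in polynomial time. Restricting the admissible inputs to those with longest run of length two does not affect membership, so all three problems remain in \NP.

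The proof has no real obstacle; the single point that needs a sentence of care is checking that ``adding no restrictions'' is genuinely permitted by the definitions, so that the degenerate generalized patterns really do have the same occurrences as the classical pattern. As an aside, for \textsc{Mesh} \ppm one can avoid invoking Theorem~\ref{thm:lrun-npc} altogether: \textsc{Mesh} \ppm is \NP-hard because it generalizes \ppm, and Theorem~\ref{lem:lrun} reduces an arbitrary \textsc{Mesh} \ppm instance to one with $\lrun=2$ while preserving the answer, which already yields \NP-hardness of the restricted problem.
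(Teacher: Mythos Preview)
Your proposal is correct and matches the paper's own (very terse) proof: hardness is inherited from Theorem~\ref{thm:lrun-npc} via the observation that a classical pattern is a degenerate vincular/bivincular/mesh pattern, and membership follows from polynomial-time verifiability of a candidate matching. Your aside about deriving the \textsc{Mesh} case directly from Theorem~\ref{lem:lrun} is exactly why the paper cites both results for hardness.
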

\begin{proof}
\NP-hardness follows from the Theorem~\ref{thm:lrun} as well as from Theorem~\ref{thm:lrun-npc}.
\NP-membership holds since checking whether the additional restrictions imposed by the vincular, bivincular or mesh pattern are fulfilled can clearly be done in polynomial time.
\end{proof}

\subsection{Polynomial time algorithms}
\label{sec:sub:poly}

We have seen that polynomial time algorithms are unlikely to exist for \ppm and its generalizations.
However, this is not the case for the special cases of boxed mesh and consecutive pattern matching.

\begin{theorem}
\label{thm:polyBoxed}
\textsc{Boxed Mesh Permutation Pattern Matching} can be solved in $\mathcal{O}(n^3)$ time.
\end{theorem}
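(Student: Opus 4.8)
The plan is to re-read a boxed occurrence as a geometric \emph{isolation property} and then to observe that such an occurrence is rigid enough to be recovered from the positions of its leftmost and rightmost entries alone, which leaves only $\bigO(n^2)$ candidates to test.

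The first step is a characterisation obtained by unwinding Definition~\ref{def:mesh} with $R=[1,k-1]\times[1,k-1]$. Write $\pi$ (of length $k$) for the permutation underlying the boxed pattern \fbox{$\pi$}, let $T(i_1)T(i_2)\ldots T(i_k)$ with $i_1<i_2<\ldots<i_k$ be a subsequence of $T$ order-isomorphic to $\pi$, and let $c$ and $d$ be the smallest and largest of the matched values $T(i_1),\ldots,T(i_k)$. For $y\in[1,k-1]$ the entry $T(i_{\pi^{-1}(y)})$ is exactly the $y$-th smallest matched value, so as $x$ and $y$ range over $1,\ldots,k-1$ the cells of $R$ collectively forbid precisely those entries of $T$ whose position lies strictly between $i_1$ and $i_k$ and whose value lies strictly between $c$ and $d$; no matched entry lies in such a cell. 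Since a non-matched entry of $T$ never equals a matched value, this is equivalent to the \emph{isolation property}: every entry of $T$ inside the box $[i_1,i_k]\times[c,d]$ is one of the $k$ matched entries. In other words, the matched points are exactly the points of $T$ inside their own bounding box.

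Next I would exploit this rigidity. Fix positions $a<b$, think of them as the leftmost and rightmost positions of a prospective occurrence, put $W=T(a)T(a+1)\ldots T(b)$, and let $r$ and $s$ be the ranks of $T(a)$ and $T(b)$ among the values occurring in $W$. By the isolation property, an occurrence with endpoints $a$ and $b$ consists of all entries of $W$ whose value lies in an interval, hence of exactly $k$ entries of $W$ whose ranks in $W$ form a block $\ell,\ell+1,\ldots,\ell+k-1$; and since $T(a)$ is the leftmost of them, order-isomorphism to $\pi$ forces $T(a)$ to be their $\pi(1)$-th smallest, so $r-\ell+1=\pi(1)$, and similarly $s-\ell+1=\pi(k)$. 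Thus an occurrence with these endpoints can exist only when $s-r=\pi(k)-\pi(1)$, and then $\ell=r-\pi(1)+1$ is uniquely determined. Conversely, whenever $a<b$ satisfies $s-r=\pi(k)-\pi(1)$ together with $1\le\ell$ and $\ell+k-1\le b-a+1$, the inequalities $1\le\pi(1)\le\pi(k)\le k$ imply that the set $S$ of entries of $W$ with rank in $\ell,\ldots,\ell+k-1$ contains both $T(a)$ and $T(b)$, so its bounding box is exactly $[a,b]\times[c,d]$ with $c,d$ the least and greatest values in $S$, and $S$ is exactly the set of points of $T$ in that box. By the characterisation, this $S$ is an occurrence of \fbox{$\pi$} precisely when the position-ordered subsequence $S$ of $W$ is order-isomorphic to $\pi$.

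The algorithm is then immediate: after disposing of the trivial cases $k\le 1$ and $k>n$, iterate over the $\bigO(n^2)$ pairs $a<b$, and for each pair compute $r$, $s$, the thresholds $c$ and $d$ and the candidate subsequence $S$, and test whether $S$ is order-isomorphic to $\pi$, stopping as soon as some pair yields an occurrence. Keeping, for each fixed $a$, the values $T(a),\ldots,T(b)$ in sorted order as $b$ grows lets the two ranks, the thresholds, the extraction of $S$ and the $\bigO(k)$ order-isomorphism test all be done in $\bigO(n)$ time per pair, for a total of $\bigO(n^3)$. The step I expect to be the main obstacle is the rigidity argument of the third paragraph — proving that the value window of an occurrence must be a block of consecutive ranks of $W$ and that $\pi(1)$ pins this block down uniquely; the characterisation merely unpacks Definition~\ref{def:mesh}, and the complexity bookkeeping is routine.
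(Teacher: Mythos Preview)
Your argument is correct and rests on the same core insight as the paper --- that a boxed occurrence is completely rigid once its bounding box is known, leaving only $\bigO(n^2)$ candidates to test --- but you parameterise the search dually. The paper iterates over the pair $(i,j)$ of \emph{values} taken by the smallest and largest matched entries: deleting from $T$ all entries with value outside $[i,j]$ yields a permutation $\tilde T$ in which the occurrence must be a contiguous block of length $k$, and since $1$ and $k$ must occupy positions $\pi^{-1}(1)$ and $\pi^{-1}(k)$ within that block, the block is pinned down uniquely. You instead iterate over the pair $(a,b)$ of \emph{positions} of the leftmost and rightmost matched entries and pin down a unique block of consecutive value-ranks inside the window $T(a)\ldots T(b)$ via $\pi(1)$ and $\pi(k)$. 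Each approach exploits one face of the same isolation property (contiguous in position after restricting by value, versus contiguous in rank after restricting by position); your write-up is a little more explicit about why the candidate is unique, while the paper's is terser but otherwise equivalent.
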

\begin{proof}
Let $P$ be a boxed pattern of length $k$ and $T$ a permutation of length $n$.
For every pair $(i,j)$ where $i \in [n]$ and $i+k \leq j \leq n$ check whether there is a matching $\mu$ of the \textit{boxed} pattern $P$ into $T$ where the smallest element in $P$ is matched to $i$ and the largest one to $j$, i.e., $\mu(1)=i$ and $\mu(k)=j$.

Checking whether such a matching exists can be done in the following way:
From the permutation $T$, construct the permutation $\tilde{T}$ by deleting all elements that are smaller than $i$ and larger than $j$.
Clearly, the matching that we are looking for must be contained in $\tilde{T}$, it could otherwise not be an occurrence of a boxed pattern.
Moreover, it has to consist of $k$ \textit{consecutive} elements in $\tilde{T}$.
Since the positions of the smallest and the largest element are fixed, the positions for all other elements of $P$ are equally determined.
Thus there is only one subsequence of $T$ that could possibly be a matching of $P$ into $T$ with $\mu(1)=i$ and $\mu(k)=j$.
Deleting the elements that are too small or too large and checking whether this subsequence actually corresponds to an occurrence of $P$ in $T$, i.e., whether it is order-isomorphic to $P$, can be checked in at most $n$ steps.
Note that this subsequence might consist of less than $k$ elements in which case it clearly does not correspond to an occurrence.

In total, there are $(n-k+1)\cdot(n-k+2)/2=\mathcal{O}(n^2)$ pairs $(i,j)$ that have to be checked which leads to the runtime bound $\mathcal{O}(n^3)$.
\end{proof}

\begin{theorem}
\label{thm:polyConsec}
\textsc{Consecutive Permutation Pattern Matching} can be solved in $\mathcal{O}((n-k)\cdot k)$ time.
\end{theorem}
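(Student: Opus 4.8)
The plan is to slide a length-$k$ window across $T$ and to test each window for order-isomorphism with $P$, after an $O(k)$ preprocessing of the pattern.

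\textbf{Reduction to window tests.} By the definition of consecutive patterns, an occurrence of $P$ in $T$ is exactly a contiguous subsequence $W_i = T(i)\,T(i+1)\cdots T(i+k-1)$ of $T$ that is order-isomorphic to $P$. If $k>n$ there is no occurrence, and if $k=n$ a single test settles the question, so I may assume $1\le k<n$; then there are $n-k+1\le 2(n-k)$ windows $W_1,\dots,W_{n-k+1}$ to consider.

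\textbf{The $O(k)$ test.} First compute $P^{-1}$ in $O(k)$ time, writing $\sigma=P^{-1}$, so that $\sigma(r)$ is the position in $P$ of its $r$-th smallest entry. I claim $W_i$ is order-isomorphic to $P$ if and only if $W_i(\sigma(1)) < W_i(\sigma(2)) < \cdots < W_i(\sigma(k))$. Indeed, $W_i$ is order-isomorphic to $P$ iff for every rank $r$ the $r$-th smallest entry of $W_i$ sits at position $\sigma(r)$, and since the entries of $W_i$ are pairwise distinct this holds exactly when the displayed chain of $k-1$ inequalities is satisfied. Each window is therefore tested with $k-1$ comparisons, i.e.\ in $O(k)$ time, and $T$ contains $P$ iff some $W_i$ passes the test.

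\textbf{Running time and main obstacle.} The preprocessing costs $O(k)$ and each of the $n-k+1$ windows costs $O(k)$, for a total of $O((n-k)\cdot k)$. As is typical for such sliding-window arguments, essentially the only point needing care is keeping the per-window work linear in $k$ rather than the $O(k\log k)$ that a naive re-sort of each window would incur; reducing the test to the fixed comparison chain induced by $P^{-1}$ is precisely what removes the sort from the inner loop. (One could try to share comparisons between overlapping windows for a further speed-up, but that is not needed for the stated bound.)
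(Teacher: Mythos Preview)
Your proof is correct and follows essentially the same sliding-window approach as the paper: test each of the $n-k+1$ contiguous length-$k$ windows for order-isomorphism with $P$ in $O(k)$ time. The paper merely asserts that the per-window test takes $k$ steps, whereas you make this explicit via the $P^{-1}$ comparison chain, which is a welcome clarification but not a different argument.
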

\begin{proof}
Let $P$ be a consecutive pattern of length $k$ and $T$ a permutation of length $n$.
For every $i \in [n-k+1]$ check whether there is a matching of $P$ into $T$ where the first element of $P$ is mapped to $i$.
Since we are looking for an occurrence of a consecutive pattern, the only possible subsequence of $T$ then consists of the element $i$ and the following $(k-1)$ elements of $T$.
Whether this sequence is order-isomorphic to $P$ can be checked in $k$ steps which leads to the runtime bound $\mathcal{O}((n-k)\cdot k)$.
\end{proof}

As was recently shown by Kubica et.al.\ in \cite{Kubica2013430}, this simple result can be improved by an algorithm with runtime $\mathcal{O}(n + k)$.

Even though \ppm is \NP-complete in the general case, there are special cases of input instances for which the problem can be solved efficiently, i.e., in polynomial time.
In the following we list the cases for which it is known that \ppm can be solved in polynomial time.
\begin{itemize}
\item In case the pattern is a \textit{separable} permutation, i.e., a permutation avoiding both $3142$ and $2413$, \ppm can be solved in $\mathcal{O}(k \cdot n^6)$~\cite{Bose1998277}.
This runtime has been improved to $\mathcal{O}(k \cdot n^4)$ in~\cite{DBLP:journals/ipl/Ibarra97}.
\item \ppm can be solved in $\bigO(n \log n)$-time for all patterns of length four \cite{springer:AlbertAAH01}.
For the pattern $4312$, this is even possible in linear time~\cite{2012arXiv1211.7110M}.
\item In case $P$ is the identity $12\ldots k$, \ppm consists of looking for an increasing subsequence of length $k$ in the text -- this is a special case of the \probfont{Longest Increasing Subsequence} problem.
This problem can be solved in $\bigO(n \log n)$-time for sequences in general \cite{schensted1987longest} and in $\mathcal{O}(n \log \log n)$-time for permutations \cite{DBLP:journals/ipl/ChangW92,maekinen2001longest}.
\item A $\bigO(k^2n^6)$-time algorithm is presented in \cite{springer:GuillemotV09} for the case that both the text and the pattern are $321$-avoiding.
\end{itemize}

\section{The impact of the pattern length}
\label{sec:parameterized}

\ppm can be solved in $\mathcal{O}(n^k)$ time by exhaustive search, where $k$ is the length of $P$.
This trivial upper bound has been improved first by Albert et al.\ to $\bigO(n^{1+2k/3}\cdot\log n)$~\cite{springer:AlbertAAH01} and then to $\bigO(n^{0.47k+o(k)})$ by Ahal and Rabinovich~\cite{DBLP:journals/siamdm/AhalR08}.
In a recent breakthrough result, Guillemot and Marx have shown that \ppm can be solved by an \fpt algorithm~\cite{guillemotmarx2013ppmfpt}.
Its runtime is $2^{\bigO(k^2\cdot \log k)}\cdot n$.
In this section we are going to show that such a result is likely not to be achievable for \textsc{Vincular, Bivincular} and \textsc{Mesh} \ppm.
This is done by showing \w{1}-hardness with respect to the parameter $k$.
First, we show that \textsc{Mesh} \ppm and therefore all other problems studied in this paper are contained in \w{1}.

All results in this section are summarized in Figure~\ref{fig:hierarchy_param} on page \pageref{fig:hierarchy_param}.

\begin{theorem}
\textsc{Mesh Permutation Pattern Matching} is contained in \w{1}.
\label{thm:w1mem}
\end{theorem}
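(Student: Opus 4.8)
The plan is to show membership in \w{1} by exhibiting an fpt-reduction from \textsc{Mesh PPM}, parameterized by $k = \card\pi$, to \probfont{Short Turing Machine Acceptance}, parameterized by the number of steps. Recall that a problem lies in \w{1} precisely when it fpt-reduces to \probfont{Short Turing Machine Acceptance}; so I would build, from a mesh pattern $P = (\pi, R)$ and a text $T$ of length $n$, a nondeterministic Turing machine $M$, an input word $x$, and a bound $k'$ depending only on $k$, such that $M$ accepts $x$ within $k'$ steps iff $P$ occurs in $T$. The machine will nondeterministically guess the matching and then verify it; the whole point is that the guessing needs only $k$ ``real'' choices and the verification, although it reads the input, can be arranged to take a number of steps bounded by a function of $k$ alone — not of $n$.

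\textbf{Setting up the machine.} The input word $x$ encodes $T$ (and $R$) on the tape. The machine $M$ first nondeterministically writes down, in $k$ steps, indices $i_1 < i_2 < \dots < i_k$ from $[n]$ — more precisely $M$ has a large transition table, one transition per possible index value, so each index is guessed in a single step. It then must check three things: (i) the subsequence $T(i_1)\dots T(i_k)$ is order-isomorphic to $\pi$; (ii) the underlined/overlined-type adjacency conditions implicit in $R$ that force exact adjacency of positions or values are met; (iii) for every $(x,y) \in R$, no element of $T$ lies in the corresponding shaded cell, i.e.\ there is no $i$ with $i_x < i < i_{x+1}$ and $T(i_{\pi^{-1}(y)}) < T(i) < T(i_{\pi^{-1}(y+1)})$ (with the usual conventions for the boundary indices $0$ and $k$). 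Condition (iii) is the delicate one because it is a universally quantified statement over all $i \in [n]$.

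\textbf{Handling the verification in $f(k)$ steps.} The standard trick is to enlarge the machine's alphabet and transition table so that verification reads only a bounded amount of the tape. Here is the cleanest route: since the input alphabet and transition table of $M$ are allowed to depend on the instance (only the running time must be bounded by the parameter), one can pre-process $T$ into a form where each relevant query is answerable in $O(1)$ steps. Concretely, for condition (iii) the machine needs, for each consecutive pair $(i_x, i_{x+1})$ of guessed positions and each pair of guessed values, to know whether any text element ``falls in the box.'' I would encode $T$ redundantly: for every pair $(a,b)$ of positions and every pair $(c,d)$ of values, store a single bit saying whether some $i \in (a,b)$ has $T(i) \in (c,d)$. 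This is a table of size $O(n^4)$, still polynomial, hence computable by the fpt-reduction (which is allowed polynomial time in $\card I$). The machine, having guessed the $i_j$ and the induced values, then performs at most $\card R \le k^2$ lookups, each of which costs a bounded number of steps because the machine can ``jump'' to the right table entry via its transition function keyed on the guessed symbols. Conditions (i) and (ii) are similar but easier — order-isomorphism of a $k$-element subsequence and the exact-adjacency checks involve only the $k$ guessed symbols and $O(k^2)$ comparisons, each doable in $O(1)$ steps after an analogous pre-tabulation of pairwise comparisons. Summing up, $M$ runs in $k' = O(k^2)$ steps, $k'$ depends only on $k$, and the reduction is polynomial-time; hence \w{1}-membership follows.

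\textbf{The main obstacle} I anticipate is making precise that the Turing machine's transition table — whose size is permitted to grow with the instance — can be used to perform the ``random access'' lookups into the pre-computed tables in $O(1)$ steps, rather than scanning the tape. One has to be a little careful that what is guessed (the indices $i_j$, or equivalently pointers into the tables) is written in a way the finite control can branch on; the usual resolution is to let $M$ guess each index as a single tape symbol drawn from an alphabet of size $n$, so that the transition function can be indexed by it directly. Once that bookkeeping is pinned down, everything else is routine: the correspondence between accepting runs and occurrences of $P$ in $T$ is immediate from Definition~\ref{def:mesh}, and the step count is manifestly bounded by a function of $k$.
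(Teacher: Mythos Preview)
Your approach is correct and takes a different route from the paper. The paper reduces \textsc{Mesh PPM} to \probfont{Existential First-Order Model Checking} (parameterized by formula length), which is known to be \w{1}-complete: it builds a structure $\cA=(A,<,\prec_T,E)$ on domain $[n]$, where $E(w,x,y,z)$ holds iff no element of $T$ lies in the box bounded by positions $w,x$ and values $T(y),T(z)$ --- exactly your $O(n^4)$ table --- and then writes an existential formula with $k$ quantifiers and $O(k^2)$ atoms checking order-isomorphism together with one $E$-atom per shaded cell. You instead go directly to \probfont{Short Turing Machine Acceptance} and hard-wire the same quaternary relation into the machine's transition function. The conceptual content is identical: once the ``empty box'' predicate is precomputed in polynomial time, an occurrence is certified by $k$ nondeterministic choices plus $O(k^2)$ constant-arity lookups. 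What the paper's route buys is that all the Turing-machine bookkeeping you flag in your last paragraph --- that the state set must have size $O(n^3)$ so the control can carry three guessed symbols while scanning a fourth, and that each lookup actually costs $O(k)$ head moves across the $k$ guessed cells rather than $O(1)$ --- is absorbed into the known \w{1}-completeness of the target problem. Your route is more self-contained but needs exactly that care spelled out to be fully rigorous; once it is, the step bound comes out as $O(k^3)$ rather than the $O(k^2)$ you state, which is of course still fine.
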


\begin{proof}
For showing membership we encode \textsc{Mesh PPM} as a model checking problem of an existential first order formula.
\w{1}-membership is then a consequence of the fact that the following problem is \w{1}-complete~\cite{FlumG05}.
\pprob
{\probfont{Existential first-order model checking}}
{A structure $\cA$ and an existential first-order formula $\varphi$}
{$\card{\varphi}$}
{Is $\cA$ a model for $\varphi$?}
Let $((P,R),T)$ be a \textsc{Mesh PPM} instance.
We compute a structure $\cA=(A,<,\prec_T,E)$, where the domain set $A=\{1,\ldots,n\}$ represents indices in the text. 
The binary relation $\prec_T$ is defined by $x\prec_T y$ holds iff $T(x)<T(y)$.
$E$ is a quaternary relation where $E(w,x,y,z)$ is true iff there are no elements in $T$ that are left of $w$, right of $x$, larger than $y$ and smaller than $z$.
Intuitively, $w,x,y$ and $z$ describe a \textit{forbidden} rectangle in the permutation grid of $T$ which may not contain any elements of $T$.
$T_<$, $E$ and $<$ can be computed in polynomial time. The formula $\varphi$ we want to check is
\begin{align*}
& \varphi  = \exists x_1 \ldots \exists x_k \quad {x_1<x_2} \ \wedge \   {x_2<x_3} \  \wedge \  \ldots  \  \wedge \   x_{k-1}<x_k \ \wedge\\
& \underbrace{\bigwedge_{\substack{P(i)<P(j) \\ \text{for }i,j\in [k]}} x_i \prec_T x_j}_{\varphi_1} \wedge 
\underbrace{\bigwedge_{\substack{P(i)>P(j) \\ \text{for }i,j\in [k]}} \neg (x_i \prec_T x_j)}_{\varphi_2} \wedge 
\underbrace{\bigwedge_{\substack{i,j\in[k]\text{ and }\\R(i,j)\text{ is true.}}} E(x_i,x_{i+1},x_j,x_{j+1})}_{\varphi_3}.
\end{align*}
Observe that the length of $\varphi$ is in $\bigO(k^2)$. 
The two sub-formulas $\varphi_1$ and $\varphi_2$ are exactly then true when a subsequence $T(x_1) T(x_2) \ldots T(x_k)$ of $T$ can be found such that $T(x_i)<T(x_j)$ iff $P(i)<P(j)$.
Thus $\varphi_1 \wedge \varphi_2$ is true iff there is a matching of the classical pattern $P$ into $T$.
The sub-formula $\varphi_3$ encodes the relation $R$ and is true iff no elements lie in the forbidden regions of $T$, as can be seen by recalling Definition~\ref{def:mesh}.
Thus $\varphi$ is true iff $((P,R),T)$ is a yes-instance of \textsc{Mesh PPM}.
\end{proof}

We now want to prove \w{1}-hardness for vincular, bivincular and mesh pattern matching.
For this purpose, we introduce here \textsc{Segregated Permutation Pattern Matching}, a generalization of \ppm .
All subsequent hardness theorems use reductions from this problem.

\pprob{\textsc{Segregated Permutation Pattern Matching (SPPM)}}
{A \pea $T$ (the text) of length $n$, a permutation $P$ (the pattern) of length $k\leq n$ and two positive integers $p\in[k]$, $t\in[n]$.}
{$k$}
{Is there a matching $\mu$ of $P$ into $T$ such that $\mu(i)\leq t$ iff $i\leq p$?}

\begin{example}
Consider the pattern $P=132$ and the text $T=53142$. 
As shown by the matching $\mu(2)=3$, $\mu(1)=1$ and $\mu(3)=4$, the instance $(P,T,2,3)$ is a yes-instance of the \textsc{SPPM} problem.
However, $(P,T,2,4)$ is a NO-instance, since no matching of $P$ into $T$ can be found where $\mu(3)>4$.
\end{example}

\begin{theorem}
\textsc{Segregated Permutation Pattern Matching} is \w{1}-hard with respect to the parameter $k$.
\label{thm:w1_completeness_run(P)}
\end{theorem}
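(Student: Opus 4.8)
The plan is to give an fpt-reduction from $k$-\textsc{Clique}, the canonical $\w1$-complete problem (the coloured variant \textsc{Multicoloured Clique} would serve equally well and is a little cleaner), to \sppm. So suppose we are given a graph $G=(V,E)$ with $V=\{v_1,\dots,v_N\}$ together with a target clique size $k$; the goal is to construct, in polynomial time, a text $T$, a pattern $P$ of length $\bigO(k^2)$, and cut parameters $p$ and $t$ such that $G$ has a clique of size $k$ if and only if the resulting instance is a yes-instance of \sppm. Since $\bigO(k^2)$ is a computable function of $k$, this is a valid fpt-reduction.

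The intended construction is as follows. The text $T$ is assembled from $N$ pairwise \emph{value-disjoint} vertex gadgets $B_1,\dots,B_N$, laid out along an increasing staircase so that no element of $B_a$ can interleave -- either in position or in value -- with an element of $B_b$ for $a\neq b$; inside $B_v$ one encodes the neighbourhood of $v$ by a small internal sub-configuration. The pattern $P$ carries one \emph{selector} element for each of the $k$ clique slots, together with, for every pair $1\le a<b\le k$, a constant number of \emph{edge-test} elements. The relative order prescribed by $P$ is chosen so that (i) the selector of slot $a$ must land inside some gadget $B_w$, and (ii) the edge-test elements of the pair $(a,b)$ can be placed only on a witness, inside the gadgets, of the edge joining the slot-$a$ vertex and the slot-$b$ vertex. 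A matching of $P$ into $T$ then exists precisely when one can choose $k$ pairwise adjacent vertices. Finally $p$ and $t$ are set so that the segregation line of \sppm separates the \emph{selection zone} of $T$ (holding all selector targets) from the \emph{verification zone} (holding all edge-test targets); it is exactly this forced separation that pins down the roles of the pattern elements and rules out matchings that confuse the two layers.

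Correctness is then checked in both directions. Given a $k$-clique $\{w_1,\dots,w_k\}$ of $G$, route the slot-$a$ selector into $B_{w_a}$ and, for each pair $(a,b)$, the corresponding edge-test elements onto the witness of $\{w_a,w_b\}$; by construction this respects all order constraints and the cut, hence is a valid segregated matching. Conversely, from any segregated matching, the cut together with the order constraints forces the $k$ selectors into $k$ \emph{distinct} gadgets $B_{w_1},\dots,B_{w_k}$, and the mere placeability of the edge-test elements of each pair $(a,b)$ certifies $\{w_a,w_b\}\in E$; so $\{w_1,\dots,w_k\}$ is a clique. Everything is computable in polynomial time.

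The step I expect to carry the real weight is the gadget design, specifically the argument that \emph{no unintended matching exists}: one must make the value ranges and positions of the gadgets, and the internal layout of each $B_v$, rigid enough that the only way to satisfy the $\bigO(k^2)$ simultaneous order constraints and the segregation line is the intended one -- one selector per vertex gadget and one edge-test bundle per edge witness. Value-disjoint blocks on a staircase take care of cross-gadget interleaving, and the segregation line takes care of the interaction between the selection and verification layers; what remains delicate is the inside of a single gadget, where the selector and the edge-test elements incident to it must be forced to cooperate exactly when the corresponding vertices are adjacent and never otherwise.
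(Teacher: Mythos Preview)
Your high-level plan is reasonable, but what you have written is a strategy rather than a proof: the entire technical content lives in the gadget design you explicitly defer. Without a concrete description of (a) how the neighbourhood of $v$ is encoded inside $B_v$, (b) what the edge-test elements look like and why they can be placed iff the corresponding edge exists, and (c) why a single value cut can simultaneously separate the two zones across \emph{all} gadgets (on your staircase layout this forces every $B_v$ to straddle the cut in value, which interacts badly with the staircase's value-disjointness), there is nothing to verify. The ``no unintended matching'' argument you flag as delicate is the whole proof, and it is absent.

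The paper takes a rather different and much more concrete route. Instead of per-vertex gadgets with internal neighbourhood structure, it writes the pattern as a vertex block $\langle 1\,2\cdots k\rangle$ followed by $\binom{k}{2}$ edge blocks $\langle a\,b\rangle$, and the text as $\langle 1\,2\cdots l\rangle$ followed by one block $\langle u\,v\rangle$ per edge of $G$; bracket symbols $\langle,\rangle$ force block-to-block matching. This is first done over a \emph{multiset} alphabet (the letter $a$ in an edge block is literally the same symbol as in the vertex block), and then ``de-repeated'' via the Bose--Buss--Lubiw interval trick: each symbol $j$ becomes the interval $[j,j{+}0.9]$, with the vertex block carrying the pair $(j{+}0.9,\,j)$ and later occurrences taking increasing values inside the interval. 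Crucially, segregation is used not to separate selection from verification, but to separate \emph{guards} from \emph{content}: all guard values are placed above $P_{\max}$ (respectively $T_{\max}$) and one sets $p=P_{\max}$, $t=T_{\max}$, so guards are forced to guards and the block structure becomes rigid. With this, the ``no unintended matching'' argument is essentially automatic; the price is the interval-replacement step rather than any intricate gadgetry.
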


\begin{proof} 
We show \w{1}-hardness by giving an fpt-reduction from the \w{1}-complete \textsc{Clique} problem~\cite{DowneyF99Book} to \sppm:
\pprob{\textsc{Clique}} {A graph $G=(V, E)$ and a positive integer $k$.}{$k$}{Is there a subset of vertices $S \subseteq V$ of size $k$ such that $S$ forms a clique, i.e., the induced subgraph $G[S]$ is complete?}

The reduction has three parts.
First, we will show that we are able to reduce a \textsc{Clique} instance to a pair $(P',T')$, where
$P'$ and $T'$ are two permutations on multisets, i.e., permutations in which elements may occur more than once.
Applying Definition~\ref{def:pattern} to \pe s on \mul s means that in a matching repeated elements in the pattern have to be mapped to repeated elements in the text.
In addition to repeated elements, $P'$ and $T'$ contain so-called \emph{guard elements}.
Their function is explained below.
Second, we will show how to get rid of repetitions. The method used in this step has already been used in the \NP-completeness proof of \ppm provided by Bose, Buss and Lubiw in~\cite{Bose1998277}.
Third, we implement the guards by using the segregation property and have thus reduced \textsc{Clique} to \textsc{SPPM}.

Let $(G,k)$ be a \textsc{Clique} instance, where $V=\left\lbrace v_1, v_2, \ldots, v_l \right\rbrace $ is the set of vertices and $E=\left\lbrace e_1, e_2, \ldots, e_m \right\rbrace $ the set of edges. 
Both the pattern and the text consist of a single substring coding vertices ($\dot P$ resp.\ $\dot T$) and substrings coding edges ($\bar P_i$ resp.\ $\bar T_i$ for the $i$-th substring). These substrings are listed one after the other, with \emph{guard elements} placed in between them.
These guard elements have the function of separating substrings in a matching: guard elements will have to be mapped to guard elements and  substrings embraced by two consecutive guard-elements will also have to be mapped to substrings embraced by two consecutive guard-elements.
For the moment, we will simply write brackets to indicate where guard elements are placed.
The meaning of these brackets is then the following:  a block of elements enclosed by a $\langle$ to the left and a $\rangle$ to the right has to be matched into another block of elements between two such brackets.
How the guard-elements are implemented as elements of a permutation is explained at the end of the proof after Claim~\ref{claim.repetitions}.

We define the pattern to be 
\begin{align*}
P'  &\colonequals \langle \dot P \rangle\langle \bar{P_1} \rangle\langle \bar{P_2} \rangle\langle \ldots \rangle\langle \bar{P}_{{k (k-1)}/{2}}\rangle \\
&= \langle 1 2 3 \ldots k \rangle\langle 12 \rangle\langle 13 \rangle\langle \ldots \rangle\langle 1k \rangle\langle 23 \rangle\langle \ldots  \rangle\langle 2k \rangle\langle \ldots \rangle\langle (k-1)k\rangle.
\end{align*}
$\dot P$ corresponds to a list of (indices of) $k$ vertices. The $\bar {P_i}$'s represent all possible edges between the $k$ vertices (in lexicographic order).

For the text 
\[T'  \colonequals \langle \dot T \rangle\langle \bar T_1 \rangle\langle \bar T_2 \rangle\langle \ldots \rangle\langle \bar T_{m}\rangle\]
we proceed similarly. $\dot T$ is a list of the (indices of the) $l$ vertices of $G$. The $\bar {T_i}$'s represent all edges in $G$ (again in lexicographic order). Let us give an example:
\begin{example}%
Let $l=6$ and $k=3$. Then the pattern \pea is given by \[P'=\langle 123 \rangle\langle 12 \rangle\langle 13 \rangle\langle 23 \rangle\mbox{.}\]
Consider for instance the graph $G$ with six vertices $v_1, \ldots , v_6$ and edge-set 
\[\left\lbrace 
 \left\lbrace 1, 2\right\rbrace ,   \left\lbrace 1, 6\right\rbrace , \left\lbrace 2, 3\right\rbrace , \left\lbrace 2, 4\right\rbrace ,  \left\lbrace 2, 5\right\rbrace , \left\lbrace 3, 5\right\rbrace ,  \left\lbrace 4, 5\right\rbrace , \left\lbrace 4, 6\right\rbrace \right\rbrace. \]
represented in Figure~\ref{example_w1_proof} (we write $\left\lbrace i, j\right\rbrace $ instead of $\left\lbrace v_i, v_j\right\rbrace $). 

Then the text \pea is given by: \[T'=\langle 123456 \rangle\langle 12 \rangle\langle 16 \rangle\langle 23 \rangle\langle 24 \rangle\langle 25 \rangle\langle 35 \rangle\langle 45 \rangle\langle 46\rangle. \]
\begin{figure}
\begin{center}
\begin{tikzpicture}[scale=0.9]
  \tikzstyle{white_vertex}=[circle,draw=black,minimum size=23pt,inner sep=0pt]
  \tikzstyle{black_vertex}=[circle,draw=black,fill=black!25, text=black,minimum size=23pt,inner sep=0pt]
  \tikzstyle{black_box}=[rectangle,
  fill=black!25, text=black, minimum size=12pt,inner sep=0pt]

\foreach \name/\angle/\text in { P-1/60/1, P-6/120/6,  P-4/240/4}
    \node[white_vertex,xshift=6.5cm,yshift=.5cm] (\name) at (\angle:2cm) {$v_{\text}$};
\foreach \name/\angle/\text in {P-2/0/2, P-5/180/5,  P-3/300/3}
    \node[black_vertex,xshift=6.5cm,yshift=.5cm] (\name) at (\angle:2cm) {$v_{\text}$};

\foreach \from/\to in {1/2, 1/6, 2/3, 2/4, 2/5, 3/5, 4/5, 4/6}
    { \draw[thick] (P-\from) -- (P-\to);  }

\node[anchor=west] at (9.7, 1.25) {Pattern:};
\foreach \x/\y/\name in {11.25/1/1, 11.75/1/2, 12.25/1/3, 14.5/1/12, 15/1/13, 15.5/1/23}
    \node (A-\name) at (\x+0.75, \y+0.25) {\name};

\node[anchor=west] at (9.7, -0.25) {Text:};
\foreach \x/\y/\name in {10.5/0/1, 12/0/4,  13/0/6, 13.5/0/12, 14/0/16, 15/0/24, 16.5/0/45, 17/0/46}
    \node (B-\name) at (\x+0.5, \y-0.25) {\name};
\foreach \x/\y/\name in { 11/0/2, 11.5/0/3, 12.5/0/5, 14.5/0/23, 15.5/0/25, 16/0/35}
    \node[black_box] (B-\name) at (\x+0.5, \y-0.25) {\name};
    
 \foreach \from/\to in {1/2, 2/3, 3/5, 12/23, 13/25, 23/35}
    { \draw[shorten >=1pt, ->] (A-\from) -- (B-\to);  }
    
\end{tikzpicture}
\end{center}
\caption{An example for the reduction of an \probfont{Independent Set} instance to a \ppm instance.}
\label{example_w1_proof}
\end{figure}
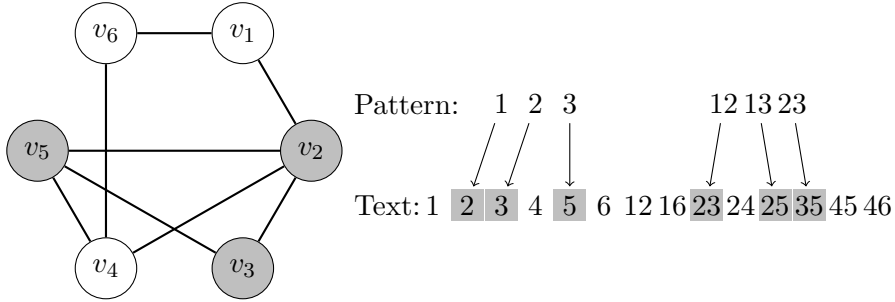
\end{example}
\begin{claim}
A clique of size $k$ can be found in $G$ iff there is a simultaneous matching of $\dot P$ into $\dot T$ and of every $\bar P_i$ into some $\bar T_j$.
\label{claim.simult.matching}
\end{claim}

\addtocounter{theorem}{-1} 
\begin{example}[continuation]
In our example $\left\lbrace v_2, v_3, v_5 \right\rbrace $ is a clique of size three. Indeed, the pattern $P'$ can be matched into $T'$ as can be seen by matching the elements $1, 2 \text{ and } 3$ onto $2, 3 \text{ and } 5$ respectively. See again Figure~\ref{example_w1_proof} where the involved vertices respectively elements of the text \pea have been marked in gray. 
\end{example}

\begin{proof}[Proof of Claim \ref{claim.simult.matching}]
A matching of $\dot P$ into $\dot T$ corresponds to a selection of $k$ vertices amongst the $l$ vertices of $G$. If it is possible to additionally match every one of the $\bar P$'s into a $\bar T$ this means that all possible edges between the selected vertices appear in $G$. This is because $T'$ only contains pairs of indices that correspond to edges appearing in the graph. 
The selected $k$ vertices thus form a clique in $G$. Conversely, if for every possible matching of $\dot P$ into $\dot T$ defined by a monotone map $\mu: [k] \rightarrow [l] $ some $\bar P_i=xy$ cannot be matched  into $T'$, this means that $\left\lbrace \mu(x), \mu(y)\right\rbrace $ does not appear as an edge in $G$. Thus, for every selection of $k$ vertices there will always be at least one pair of vertices that are not connected by an edge and therefore there is no clique of size $k$ in $G$. 
\end{proof}

In order to get rid of repeated elements, we identify every variable with a real interval: $1$ corresponds to the interval $[1,1.9]$, $2$ to $[2,2.9]$ and so on until finally $k$ corresponds to $[k,k+0.9]$ (resp.\ $l$ to $[l,l+0.9]$). In $\dot P$ and $\dot T$ we shall therefore replace every element $j$ by the pair of elements $(j+0.9,j)$ (in this order). 
The occurrences of $j$ in the $\bar P_i$'s (resp.\ $\bar T_i$'s) shall then successively be replaced by real numbers in the interval $[j, j+0.9]$. For every $j$, these values are chosen one after the other (from left to right), always picking a real number that is larger than all the previously chosen ones in the interval $[j, j+0.9]$. 

Observe the following: The obtained sequence is not a \pea in the classical sense since it consists of real numbers. However, by replacing the smallest number by 1, the second smallest by 2 and so on, we do obtain an ordinary \pe . This defines $P$ and $T$ (except for the guard elements). 

\addtocounter{theorem}{-1} 
\begin{example}[continuation]
Getting rid of repetitions in the pattern of the above example could for instance be done in the following way:
\[P=\langle 1.9 \hspace{1mm} 1 \hspace{1mm} 2.9 \hspace{1mm} 2 \hspace{1mm} 3.9 \hspace{1mm} 3 \rangle\langle 1.1 \hspace{1mm} 2.1 \rangle\langle 1.2\hspace{1mm} 3.1 \rangle\langle 2.2 \hspace{1mm}3.2 \rangle\]
This \pea of real numbers is order-isomorphic to the following ordinary \pe :
\[P=\langle \perm{4,1,8,5,12,9} \rangle\langle \perm{2,6} \rangle\langle \perm{3,10} \rangle\langle \perm{7,11}\rangle.\]
\end{example}

\begin{claim}
$P$ can be matched into $T$ iff $P'$ can be matched into $T'$.
\label{claim.repetitions}
\end{claim}

\begin{proof}[Proof of Claim \ref{claim.repetitions}]
Suppose that $P'$ can be matched into $T'$. When matching $P$ into $T$, we have to make sure that elements in $P$ that were copies of some repeated element in $P'$ may still be mapped to elements in $T$ that were copies themselves in $T'$. Indeed this is possible since we have chosen the real numbers replacing repeated elements in increasing order. If $i$ in $P'$ was matched to $j$ in $T'$, then the pair $(i+0.9,i)$ in $P$ may be matched to the pair $(j+0.9,j)$ in $T$ and the increasing sequence of elements in the interval $[i,i+0.9]$ may be matched into the increasing sequence of elements in the interval $[j,j+0.9]$.

Now suppose that $P$ can be matched into $T$. In order to prove that this implies that $P'$ can be matched into $T'$, we merely need to show that elements in $P$ that were copies of some repeated element in $P'$ have to be mapped to elements in $T$ that were copies themselves in $T'$. Then returning to repeated elements clearly preserves the matching.
Firstly, it is clear that a pair of consecutive elements $i+0.9$ and $i$ in $\dot P$ has to be matched to some pair of consecutive elements $j+0.9$ and $j$ in $\dot T$, since $j$ is the only element smaller than $j+0.9$ and appearing to its right. Thus intervals are matched to intervals. Secondly, an element $x$ in $P$ for which it holds that $i < x < i +0.9$ must be matched to an element $y$ in $T$ for which it holds that $j < y < j +0.9$. Thus copies of an element are still matched to copies of some other element. 

Finally, replacing real numbers by integers does not change the permutations in any relevant way.
\end{proof}

It remains to implement the guards in order to ensure that substrings are matched to corresponding substrings.
Let $P_{\max}$ and $T_{\max}$ denote the largest integer that is contained in $P$ respectively $T$ at this point.
We now replace all guards with integers larger than $P_{\max}$ respectively $T_{\max}$ and will choose the \textit{segregating elements} $p$ and $t$  such that guards and ``original'' pattern/text elements are separated.
We insert the guard elements in the designated positions (previously marked by $\langle$ and $\rangle$) in the following order: $P_{\max} +2 \text{ (instead of the first } \langle), P_{\max} + 1 \text{ (instead of the first } \rangle), P_{\max} +4$ (instead of the second $\langle), P_{\max} +3$ (instead of the second $\rangle), \ldots, P_{\max} +2i$ (instead of the $i\text{-th } \langle), P_{\max} +2i-1$ (instead of the $i\text{-th } \rangle), \ldots$, and so on until we reach the last guard-position.
The guard elements are inserted in this specific order to ensure that two neighboring guard elements $\langle$ and $\rangle$ in $P$ have to be mapped to two neighboring guard elements $\langle$ and $\rangle$ in $T$.
We proceed analogously in $T$.
To ensure that guards in $P$ are matched to guards in $T$ and pattern elements of $P$ are matched to text elements in $T$, we set $p$ to $P_{\max}$ and $t$ to $T_{\max}$.

This finally yields that $(G,k)$ is a yes-instance of \probfont{Clique} iff $(P,T)$ is a yes-instance of \sppm.
It can easily be verified that this reduction can be done in fpt-time.
\end{proof}

As can easily be seen, the reduction performed in the proof of Theorem~\ref{thm:w1_completeness_run(P)} can be done in polynomial time. 
Thus this proof immediately yields \NP-hardness for \sppm.

Now, that we have obtained this result, we are able to show \w{1}-hardness for \ppm with vincular, bivincular and mesh patterns.
As before, the parameter is the length of the pattern.

\begin{theorem}
\textsc{Vincular Permutation Pattern Matching} is \w{1}-complete with respect to $k$. 
This holds even when restricting the problem to instances $(P,T)$ with $\textsf{cols}(P) = 1$.
\label{thm:w1_completeness_vincular}
\end{theorem}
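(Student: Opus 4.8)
The statement has two halves, \w{1}-membership and \w{1}-hardness, and I would treat them separately. Membership is essentially free from Theorem~\ref{thm:w1mem}: a vincular pattern is a special mesh pattern, since an underlined consecutive pair $\underline{P(j)P(j+1)}$ is the mesh relation $\{j\}\times[0,k]$ (a full column), and a leading or trailing anchor is the full column $\{0\}\times[0,k]$ resp.\ $\{k\}\times[0,k]$; by Definition~\ref{def:mesh} forbidding a full column $x=j$ says exactly $i_{j+1}=i_j+1$, i.e.\ positions $j,j+1$ of the occurrence are adjacent in $T$, and the $x=0$, $x=k$ columns force $i_1=1$, $i_k=n$. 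This translation is polynomial and keeps the permutation (hence the parameter) unchanged, so \vppm fpt-reduces to \textsc{Mesh PPM} and is therefore in \w{1}. For hardness I would give an fpt-reduction from \sppm, which is \w{1}-hard by Theorem~\ref{thm:w1_completeness_run(P)}.

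The first point to notice is a mismatch: \sppm segregates on \emph{values} ($\mu(i)\le t$ iff $i\le p$), whereas a vincular pattern only constrains \emph{positions} via shaded columns. So I would first pass to the inverse instance $(T^{-1},P^{-1},p,t)$, which is equivalent to $(T,P,p,t)$ and now asks: is there an occurrence of $Q:=P^{-1}$ in $S:=T^{-1}$ whose first $p$ matched positions lie in $[1,t]$ and whose last $k-p$ matched positions lie in $[t+1,n]$? This positional segregation is exactly what a shaded column is suited for. I would then build $\hat T$ from $S$ by inserting, between positions $t$ and $t+1$, a block of fresh ``guard'' elements whose values exceed everything in $S$, and build $\hat P$ from $Q$ by inserting, between positions $p$ and $p+1$, a matching block of fresh guard elements (larger than all of $Q$) arranged as a monotone run, underlining exactly one consecutive pair inside the pattern's guard block so that $\textsf{cols}(\hat P)=1$. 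This mirrors the guard mechanism used inside the proof of Theorem~\ref{thm:w1_completeness_run(P)}, but now the guards are kept apart by the vincular constraint rather than by a segregation parameter. If a single contiguous guard block turns out to be insufficient, I would interleave the guards with the surviving original entries of $Q$ and $S$, in the spirit of the spacer construction of Lemma~\ref{lem:lrun}, so that a monotone increasing run of guards in $\hat P$ can only be realized inside $\hat T$ by the inserted guard run, dragging the left and right halves of $Q$ into columns $[1,t]$ and $[t+1,n]$ respectively.

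The hard part will be exactly this forcing step. A single shaded column only yields one equality between two consecutive matched positions, which is a priori weaker than the interval condition ``$c_p\le t<c_{p+1}$'' that \sppm demands, so the whole burden falls on the guard elements: I would need a careful monotonicity-plus-cardinality argument showing that, because the pattern guards are the largest values of $\hat P$ and form an increasing run, their images in $\hat T$ must be the text guards (no ``spurious'' occurrence can place some of $Q(p+1),\dots,Q(k)$ back among columns $\le t$), and that the one underlined pair does genuinely rule out the remaining degenerate cases. The converse direction should be routine: given an \sppm witness for $(T^{-1},P^{-1},p,t)$, the canonical extension that sends each pattern guard to the corresponding text guard is order-isomorphic, and the underlined consecutive pair is mapped to two adjacent positions inside the text guard block, so it is a valid vincular occurrence of $\hat P$ in $\hat T$. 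Finally I would check $|\hat P|=\bigO(k)$ and $|\hat T|=\bigO(n)$ and that the construction (including the inversion) is polynomial-time, so this is an fpt-reduction with new parameter bounded by a function of $k$; together with membership this gives \w{1}-completeness of \vppm, and $\textsf{cols}(\hat P)=1$ holds by construction — the single shaded column being essential, since with no shaded column the problem degenerates to classical \ppm.
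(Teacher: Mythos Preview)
Your membership argument is correct and matches the paper: vincular patterns are mesh patterns with full shaded columns, and Theorem~\ref{thm:w1mem} applies.

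For hardness, however, your route is both more complicated than necessary and has a genuine gap. The forcing step you yourself flag as ``the hard part'' does not go through as written. With a bounded-size guard block of largest values inserted between positions $p$ and $p+1$ of $Q=P^{-1}$ (and between $t$ and $t+1$ of $S=T^{-1}$), a single underlined pair only pins two pattern guards to \emph{some} adjacent ascending pair in $\hat S$, not to the inserted text guards; any ascending adjacent pair in the original $S$ is a candidate. Making the guard run long enough to rule out spurious placements would require its length to depend on $S$, i.e.\ on the text, so the new parameter would no longer be bounded by a function of $k$ and you would not have an fpt-reduction. (Note also that if the guards alone --- without the underline --- could be made to force the segregation, your construction would already prove \w{1}-hardness of \emph{classical} \ppm, which is the open problem in the table.)

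The paper avoids the inversion detour entirely. The point you call a ``mismatch'' is not one: a single shaded column can enforce value-segregation directly. Place one fresh element $p'$ with value $p+\tfrac12$ at the \emph{front} of $P$, one fresh element $t'$ with value $t+\tfrac12$ at the front of $T$, and shade column $0$ (the left anchor). The anchor forces the first pattern position to the first text position, hence $p'\mapsto t'$; since $\mu$ is increasing on values, every $i\le p$ is sent below $t'$ and every $i>p$ above $t'$, which is exactly the \sppm condition. One new element on each side, one shaded column, and $\textsf{cols}(P')=1$ --- no inverses, no guard blocks, no interleaving.
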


\begin{proof}
We reduce from \textsc{Segregated PPM}.
Let $(P,T,p,t)$ be an \textsc{SPPM} instance.
The \textsc{Vincular PPM} instance $(P',T')$ constructed from $(P,T)$ will have have an additional element in $P'$ and an additional element in $T'$.
The new element in $P$, denoted by $p'$, is $p+0.5$, i.e., $p'$ is larger than $p$ but smaller than $p+1$.
Analogously, $t'=t+0.5$ is the new element in $T$.
We define $P' = \vinc{3}{1/\ p, 2/\ ',3/\ P}{0} \:$ and $T'=t'\ T$.
In order to obtain a permutation $P$ on $[k+1]$ and $T$ on $[n+1]$, we simply need to relabel the respective elements order-isomorphically.
In every matching of $P'$ into $T'$ the element $p'$ has to be mapped to $t'$.
Consequently, all elements larger than $p'$ in $P'$ have to be mapped to elements larger than $t'$ in $T'$ and all elements smaller than $p'$ have to be mapped to elements smaller than $t'$.
This implies that $(P,T,p,t)$ is a \textsc{Segregated PPM} yes-instance iff $(P',T')$ is a \textsc{Vincular PPM} yes-instance.
This reduction is done in linear time which proves \w{1}-hardness of \textsc{Vincular PPM}.
Membership follows from Theorem~\ref{thm:w1mem}.
\end{proof}

\begin{theorem}
\textsc{Bivincular Permutation Pattern Matching} is \w{1}-complete with respect to $k$.
This holds even when restricting the problem to instances $(P,T)$ with $\textsf{rows}(P) = 1$.
\label{thm:w1_completeness_bivincular}
\end{theorem}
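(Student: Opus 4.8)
The plan is to prove \w{1}\emph{-membership} by invoking Theorem~\ref{thm:w1mem}, and to prove \w{1}\emph{-hardness} (even for instances with $\textsf{rows}(P)=1$) by reducing from \textsc{Vincular PPM} restricted to instances with $\textsf{cols}(P)=1$, which is already known to be \w{1}-hard by Theorem~\ref{thm:w1_completeness_vincular}. The key observation is that passing to inverse permutations turns a shaded column into a shaded row: concretely, the inverse of a vincular pattern with exactly one shaded column is a bivincular pattern with exactly one shaded row and no shaded column. Membership itself is immediate, since every bivincular pattern is in particular a mesh pattern, so \textsc{Bivincular PPM} is a restriction of \textsc{Mesh PPM}, which lies in \w{1} by Theorem~\ref{thm:w1mem}.

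For hardness I would take an instance $(P,T)$ of \textsc{Vincular PPM} with $\textsf{cols}(P)=1$ and output the pair $(P^{-1},T^{-1})$. Three points need to be verified. First, $P^{-1}$ is a well-defined bivincular pattern with $\textsf{rows}(P^{-1})=1$ and $\textsf{cols}(P^{-1})=0$: reflecting the grid of $P$ across the line $y=x$ sends the points of $P$ to the points of $P^{-1}$ and the unique shaded vertical strip of $P$ to a unique shaded horizontal strip of $P^{-1}$; if $P$ uses the boundary marking forcing an occurrence to begin with the first entry of the text (the shaded ``column $0$''), this reflects to the boundary marking of $P^{-1}$ forcing an occurrence to contain the smallest entry of the text (the shaded ``row $0$'', i.e.\ an $\nivls{1}$-type marking), so $P^{-1}$ is again an admissible bivincular pattern. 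Second, this reflection is a bijection between occurrences of $P$ in $T$ and occurrences of $P^{-1}$ in $T^{-1}$: a subsequence of $T$ is order-isomorphic to $P$ exactly when the corresponding subsequence of $T^{-1}$ is order-isomorphic to $P^{-1}$, and the requirement ``no entry of the text lies in a shaded region'' is preserved because the shaded region gets reflected along with the points; hence $P$ matches into $T$ if and only if $P^{-1}$ matches into $T^{-1}$. Third, $|P^{-1}|=|P|$, so the parameter $k$ is unchanged, and $P^{-1},T^{-1}$ are computable from $P,T$ in linear time, so this is an fpt-reduction. Combining this with the membership argument yields \w{1}-completeness, even on instances with $\textsf{rows}(P)=1$.

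The only step that requires genuine care is the correspondence between occurrences under inversion for this pattern class, and in particular checking that the special boundary decoration of the vincular hardness instances of Theorem~\ref{thm:w1_completeness_vincular} is carried to a legitimate bivincular row-decoration while the ``empty shaded region'' condition stays reflection-invariant; everything else is routine bookkeeping. I would also add a remark explaining why one should not expect a direct dualization of the \sppm{}-based reduction of Theorem~\ref{thm:w1_completeness_vincular}: there, a single shaded column (``nothing to the left of the first matched position'') pins the prepended element to a text element of prescribed \emph{value rank} $t+1$, which is what encodes segregation; but a single shaded row can only pin a matched entry to an \emph{extreme} value (the smallest via an $\nivls{1}$-type marking, the largest via an $\nivrs{k}$-type marking, or force two matched entries to carry consecutive values via an overlined pair), none of which isolates the boundary value $t$ directly. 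Routing the argument through inverses is exactly what sidesteps this difficulty, since it reuses the already-established correctness of the vincular construction.
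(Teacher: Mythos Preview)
Your proof is correct and takes a genuinely different route from the paper's. The paper reduces directly from \textsc{Segregated PPM}: it prepends two elements $p'\,(k{+}1)$ to $P$ and $t'\,(n{+}1)$ to $T$, and uses the single row constraint $\nivrs{(k{+}1)}$ to force the maximum of $P'$ onto the maximum of $T'$; since these maxima sit at position~$2$ in both sequences, the first occurrence position is forced to be position~$1$ of $T'$, i.e.\ $p'\mapsto t'$, which recovers the segregation condition. Your approach instead leverages Theorem~\ref{thm:w1_completeness_vincular} together with the column/row duality under inversion that the paper itself advertises when introducing bivincular patterns. This is shorter and more conceptual, since the correctness of the occurrence bijection is exactly the statement that bivincular patterns are the inverse-closure of vincular ones; the paper's approach, by contrast, is self-contained and does not rely on the vincular theorem as a black box.

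One caveat: your closing remark that ``one should not expect a direct dualization of the \sppm-based reduction'' because a single shaded row ``cannot isolate the boundary value $t$ directly'' is mistaken. The paper does precisely such a direct reduction: the row constraint alone only pins the maximum, but combining it with the positional placement of that maximum (at position~$2$ in both $P'$ and $T'$) forces the first occurrence position, and hence pins $p'$ to $t'$. So the obstacle you describe is not real; drop that paragraph, or rephrase it as an observation that the inversion route avoids having to discover this two-step trick.
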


\begin{proof}
As in the previous proof we reduce from \textsc{Segregated PPM}.
Let $(P,T,p,t)$ be an \textsc{SPPM} instance.
Identically to the previous proof, we define $p'=p+0.5$ and $t'=t+0.5$.
The \textsc{Bivincular PPM} instance consists of a permutation $P'$ with elements in $[k+1]\cup\{p'\}$ and $T'$, a permutation on $[n+1]\cup\{p'\}$.
We define 
\[P' =  \begin{array}{ccccccc}
1 & 2 & 3 & \ldots & p' & \ldots & \nivrs{(k+1)} \\
p'& (k+1) & P(1)&  & \ldots && P(k)\end{array} \] 
and $T'= t'(n+1) T$.
In order to obtain permutations on $[k+2]$ respectively $[n+2]$ we again relabel the elements order-isomorphically.

In any matching of $P'$ into $T'$ the element $(k+1)$ has to be mapped to $(n+1)$ and therefore $p'$ has to be mapped to $t'$.
Thus all elements larger than $p'$ in $P'$ have to be mapped to elements larger than $t'$ in $T'$ and all elements smaller than $p'$ have to be mapped to elements smaller than $t'$.
This implies that $(P,T,p,t)$ is a \textsc{Segregated PPM} yes-instance iff $(P',T')$ is a \textsc{Bivincular PPM} yes-instance.
Since this reduction can again be done in linear time, \textsc{Bivincular PPM} is \w{1}-hard.
Membership follows again from Theorem~\ref{thm:w1mem}.
\end{proof}

\begin{theorem}
\textsc{Mesh Permutation Pattern Matching} is \w{1}-complete with respect to $k$. This holds even if $\textsf{cells}(P) = 1$.
\label{thm:w1_completeness_mesh}
\end{theorem}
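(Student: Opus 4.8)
The plan is that \w{1}-membership requires no new argument: it is immediate from Theorem~\ref{thm:w1mem}, since any mesh pattern --- in particular one with $\textsf{cells}(P)=1$ --- is an instance of \textsc{Mesh PPM}. For \w{1}-hardness I would give an fpt-reduction from \textsc{Segregated PPM}, which is \w{1}-hard by Theorem~\ref{thm:w1_completeness_run(P)}, following the same blueprint as the reductions in Theorems~\ref{thm:w1_completeness_vincular} and~\ref{thm:w1_completeness_bivincular}: a fresh ``separator'' element of the pattern is forced to be matched to a fresh separator element of the text, thereby encoding the segregating pair $(p,t)$; the twist is that the only available mechanism here is a single shaded cell.

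Concretely, given an \sppm instance $(P,T,p,t)$ with $\card P=k$, $\card T=n$, I would insert two elements into each permutation. Let $p'$ be a value strictly between $p$ and $p+1$ and let $M:=k+1$; set $\pi':=p'\ M\ P(1)\ P(2)\ \ldots\ P(k)$, so $p'$ sits in position $1$ and $M$ in position $2$. Symmetrically let $t'$ lie strictly between $t$ and $t+1$, let $N:=n+1$, and set $T':=t'\ N\ T(1)\ \ldots\ T(n)$. After an order-isomorphic relabelling these are permutations of $[k+2]$ and $[n+2]$; $M$ (resp.\ $N$) is the largest element, and $p'$ (resp.\ $t'$) is the unique element of value between $p$ and $p+1$ (resp.\ between $t$ and $t+1$). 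The mesh pattern is $(\pi',R)$ with the single cell $R=\{(0,k+2)\}$, the top-left corner cell: it forbids any element of $T'$ that lies strictly to the left of the first element of an occurrence and strictly above the image of $M$, and since $M$ is the largest pattern element this means strictly above every value used by the occurrence. The construction is polynomial and the parameter only grows from $k$ to $k+2$.

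The heart of the correctness proof will be showing that every occurrence $T'(i_1)\,T'(i_2)\,\ldots$ of $(\pi',R)$ in $T'$ has $i_1=1$, hence matches $\pi'(1)=p'$ to $T'(1)=t'$. I would eliminate the other cases. If $i_1=2$, then $p'$ is matched to $N=\max T'$, which is impossible since $p'<M$ in $\pi'$ would force $M$ to be matched to a value exceeding $N$. If $i_1\geq 3$, then $M$ (position $2$ of $\pi'$) cannot be matched to position $2$ of $T'$, so the largest value used by the occurrence is strictly below $N$; but $N$ lies at position $2$ --- strictly to the left of $i_1$ --- with value above every value used by the occurrence, so the shaded cell is violated. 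Hence $i_1=1$ and $p'\mapsto t'$. From here the segregation is automatic: order-preservation sends the $p$ pattern values below $p'$ to text values $\leq t$ and the $k-p$ pattern values lying between $p'$ and $M$ to text values $\geq t+1$, and because the elements of $P$ all sit after $p'$ and $M$ in $\pi'$ they are matched inside positions $3,\dots,n+2$ of $T'$, i.e.\ into the embedded copy of $T$; so the restriction of the matching to $P$ is an \sppm solution. The converse is routine: an \sppm matching of $P$ into $T$ extends to a matching of $(\pi',R)$ into $T'$ by additionally sending $p'\mapsto t'$ and $M\mapsto N$, and the single cell is then vacuously satisfied because the occurrence starts at position $1$. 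Since the parameter only grows to $k+2$, this is an fpt-reduction, and with membership we obtain \w{1}-completeness even when $\textsf{cells}(P)=1$.

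The hardest part will be exactly this compression of the earlier anchoring trick into a single cell: in the vincular reduction the separator was pinned down by an entire shaded column, and in the bivincular reduction by an overlined row together with the end markers --- each of which corresponds to $\Theta(k)$ shaded cells. What makes one cell enough is placing the fresh maximum $M$ immediately after the separator $p'$ (and $N$ immediately after $t'$), so that the only way an occurrence can avoid starting at position $1$ is to leave the global maximum $N$ sitting in the single forbidden corner region; carrying out the three-way case analysis on $i_1$ and checking that the two extra elements do not interfere with the embedded copy of $T$ is the part that needs care.
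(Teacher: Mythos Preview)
Your proof is correct and follows essentially the same approach as the paper: reduce from \textsc{Segregated PPM}, prepend a separator $p'$ (resp.\ $t'$) to pattern and text, and use the single top-left corner cell to force the occurrence to start at position~$1$, so that $p'\mapsto t'$ encodes the segregating pair. The only difference is that the paper adds the fresh maximum $(n+1)$ to the text \emph{only} (so $P'=p'\,P$ and $T'=t'\,(n{+}1)\,T$), whereas you symmetrically add a fresh maximum $M$ to the pattern as well; your variant makes the case analysis on $i_1$ slightly cleaner (in particular it sidesteps the edge case $p=k$ where $p'$ would itself be the pattern maximum), but the idea and the reduction are the same.
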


\begin{proof}
Let $(P,T,p,t)$ be a \textsc{Segregated PPM} instance.
As before, we define $p'=p+0.5$ and $t'=t+0.5$.
The \textsc{Mesh PPM} instance consists of a permutation $P'$ with elements in $[k]\cup\{p'\}$ and $T'$, a permutation on $[n+1]\cup\{p'\}$.
Again, permutations on $[k+1]$ respectively $[n+2]$ can be obtained by relabelling the elements order-isomorphically.
We define $P' = p'\ P$ and $T'=t'\ (n+1)\ T$.
Furthermore, let $R=\{(0, (k+1))\}$.
This means that for every matching $\mu$ of $P'$ into $T'$ the following must hold:
to the left of $\mu(p')$ in $T'$, there are no elements larger than $\mu(k)$.
However, it surely holds that $\mu(k) \leq (n+1)$.
Consequently, $p'$ has to be mapped to $t'$.
This implies that $(P,T,p,t)$ is a \textsc{Segregated PPM} yes-instance iff $(P',T')$ is a \textsc{Mesh PPM} yes-instance.
Since this reduction can again be done in linear time, \textsc{Mesh PPM} is \w{1}-hard.
Membership follows from Theorem~\ref{thm:w1mem}.
\end{proof}

These hardness results show that we cannot hope for a fixed-parameter tractable algorithm for \textsc{Vincular/Bivincular/Mesh Permutation Pattern Matching}.

\section{Further directions}
\label{sec:directions}

In this paper, we have strengthened the previously known \NP-hardness result for \ppm and proved \NP-completeness for its generalizations.
We have also found polynomial time algorithms for boxed mesh and consecutive \ppm.
Furthermore, we have performed a parameterized complexity analysis, which shows that for vincular, bivincular and mesh \ppm a fixed-parameter tractable algorithm is unlikely to exist.
Refer to Figure~\ref{fig:hierarchy_param} for an overview of these parameterized results.
In this section we highlight several possible research directions and mention some aspects of permutation pattern matching that have not been studied in this paper.

\begin{figure}
\begin{center}
\begin{tikzpicture}
\tikzstyle{every node}=[rectangle, draw=black]
\tikzstyle{empty}=[rectangle, draw=white, font=\footnotesize]
\tikzstyle{level 1}=[level distance=1cm, sibling distance=4cm]

\node (root) {mesh} 
child { node {boxed mesh}
}
child { node {bivincular}
	child {node {vincular}
		child {node {consecutive}
		}
		child {node {classical}
		}
	}
}
;
\node[empty, rotate=-327] (C) at (4.75,-1.75) {\fpt};
\node[empty, rotate=-327] (C) at (4,-1.4) {\w{1}-complete};
\node[empty, rotate=312] (B) at (-0.25,-1.5) {in \Ptime};
\draw[dashed] (1.75,-3.25)--(-1.5,0.25);
\draw[dashed] (1.75,-3.25)--(1.75,-3.75);
\draw[dashed] (1.75,-3.25)--(5.0,-1.15);
\end{tikzpicture} 
\end{center}
\caption{The influence of the pattern length on the computational hardness: parameterized complexity of permutation pattern matching.}
\label{fig:hierarchy_param}
\end{figure}
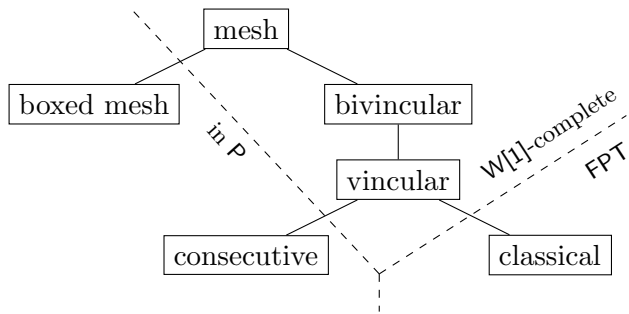

\paragraph{Polynomial time algorithms.}
In Section~\ref{sec:sub:poly} we have listed several special cases for which \ppm is in \Ptime.
This list, however, is certainly far from being complete.
In particular, polynomial time fragments of vincular, bivincular and mesh permutation pattern matching are not known at all.

\paragraph{Other parameters than $k=|P|$.}
In Section \ref{sec:parameterized} we studied the influence of the length of the pattern on the complexity of the different types of \textsc{Permutation Pattern Matching} problems.
For the cases where \w{1}-hardness was shown as well as for \ppm which is solvable by an \fpt algorithm, it is of interest to find out whether other parameters of the input instances lead to fixed parameter tractability results.
The authors~\cite{DBLP:conf/swat/BrunerL12} provided a first result in this vein by designing an algorithm that solves \ppm with a worst-case runtime of $\mathcal{O}(1.79^{\run(T)}\cdot n \cdot k)$, where $\run(T)$ denotes the number of alternating runs of $T$.
For future work any permutation statistic (see for instance the list in Appendix A.1 of~\cite{kitaev2011patterns}) could be taken into account for a parameterized complexity analysis of all versions of \ppm.
An analysis of \ppm with respect to several different parameters would then allow to draw a more detailed picture of the computational landscape of permutation pattern matching.

\paragraph{Barred patterns.}
Another type of patterns was introduced by West~\cite{west1990permutations}: \textit{barred patterns}.
A barred pattern $\overline{\pi}$ is a permutation where some letters are barred.
One denotes by $\pi$ the underlying permutation without any bars and by $\pi'$ the permutation obtained by removing all barred letters. 
A permutation $\tau$ then avoids the barred pattern $\overline{\pi}$ if every occurrence of $\pi'$ in $\tau$ is part of an occurrence of $\pi$ in $\tau$.
For details and results on barred patterns, see Chapter 7 in~\cite{kitaev2011patterns}.
Br{\"a}nd{\'e}n and Claesson~\cite{branden2011mesh} showed that mesh patterns can easily be used to write any barred pattern with only one barred letter.

\paragraph{Marked mesh and decorated patterns.}
There exist further generalizations of mesh patterns: so-called \textit{marked mesh} and \textit{decorated patterns}.
These two types of patterns were introduced by \'{U}lfarsson in~\cite{ulfarsson2010unification} and \cite{ulfarsson2012describing}, respectively.
They allow a finer control over whether certain regions in a permutation may contain elements.
Mesh patterns permit to specify regions in a permutation that may not contain elements in a matching. 
Marked mesh patterns allow more, namely to specify how many elements may be contained in certain regions (exactly, at most or at least).
Decorated patterns go even further: they allow to detail in which order these elements may lie by describing forbidden patterns.
These forbidden patterns may again be decorated patterns. 
Since both marked mesh and decorated patterns are generalizations of mesh patterns, the \w{1}-hardness result given in Theorem~\ref{thm:w1_completeness_mesh} can be extended to these two types of patterns.

\paragraph{Patterns in words.}
In this paper, we have considered patterns in \textit{permutations}.
However, the concept of pattern avoidance respectively containment can easily be extended to patterns in words over ordered alphabets (or permutations on mutlisets). 
In a matching of a word $W$ into another word $V$, copies of the same letter have to be mapped to copies of some letter in the text.
The topic of patterns in words has received quite some attention in the last years, see e.g.\ Heubach and Mansour's monograph \textit{Combinatorics of compositions and words} \cite{heubach2009combinatorics}.
The corresponding pattern matching problems have not yet been studied.

\bigskip
This list contains only some of the open problems for permutation pattern matching and shows that the computational landscape of permutation patterns remains to a large extent terra incognita.

\section*{Acknowledgements}

We wish to thank Henning \'{U}lfarsson for drawing our attention to marked mesh, decorated and other types of patterns, Sergey Kitaev for helping us with notational questions  and the anonymous referees for valuable comments and suggestions.

\bibliographystyle{abbrv}
\bibliography{../../lit}

\end{document}